\documentclass[11pt]{article} 

\usepackage[margin=1in]{geometry}
\usepackage{float}
\usepackage{amssymb,amsfonts,amsmath}
\usepackage{enumerate}
\usepackage{enumitem}
\usepackage{caption}
\usepackage{amsthm}
\usepackage{color}
\usepackage[colorlinks=true, linkcolor=red, urlcolor=blue, citecolor=blue]{hyperref}
\bibliographystyle{alpha}
\usepackage{comment}

\usepackage{hyperref}
\usepackage{url}

\usepackage{amsmath}
\usepackage{amsthm}
\usepackage{amssymb}
\usepackage{subfig}
\usepackage{array}
\usepackage{multirow}
\usepackage[english]{babel}
\usepackage{graphicx}
\usepackage{wrapfig,epsfig}
\usepackage{epstopdf}
\usepackage{url}
\usepackage{graphicx}
\usepackage{color}
\usepackage{epstopdf}
\usepackage{algpseudocode}
\usepackage{scrextend}
\usepackage[T1]{fontenc}
\usepackage{bbm}
\usepackage{comment}
\usepackage{multicol}
\usepackage{enumitem}
\setlist[enumerate]{topsep=0pt,itemsep=-1ex,partopsep=1ex,parsep=1ex}
\setlist[itemize]{topsep=0pt,itemsep=-1ex,partopsep=1ex,parsep=1ex}

\usepackage{tikz}
\usepackage{hyperref}  
\usetikzlibrary{arrows}
\usepackage[margin=1in]{geometry}
\linespread{1}
\graphicspath{{./figs/}}

\usepackage{framed}
\usepackage{calc}
\usepackage[linesnumbered,ruled]{algorithm2e}



\newcommand{\poly}{\text{poly}}

\newtheorem{theorem}{Theorem}[section]
\newtheorem{lemma}[theorem]{Lemma}
\newtheorem{definition}[theorem]{Definition}

\newtheorem{proposition}[theorem]{Proposition}

\newcommand{\Span}{\text{Span}}

\newcommand{\wh}{\widehat}
\newcommand{\wt}{\widetilde}

\newcommand{\R}{\mathbb{R}}

\renewcommand{\varepsilon}{\epsilon}
\renewcommand{\tilde}{\wt}
\renewcommand{\hat}{\wh}
\newcommand{\eps}{\epsilon}

\newcommand{\proj}{\mathbf{P}}
\newcommand{\eye}{\mathbf{I}}

\newcommand{\pr}[1]{\text{\bf Pr}\normalfont\lbrack #1 \rbrack} 



\DeclareMathOperator{\thr}{thresh}

\def\AA{\mathbf{A}}
\def\BB{\mathbf{B}}

\def\II{\mathbb{I}}

\def\X{\mathbf{X}}
\def\P{\mathbf{P}}

\def\V{\mathbf{V}}

\def\W{\mathbf{W}}

\def\G{\mathbf{G}}

\def\ZZ{\mathbf{Z}}

\title{Span Recovery for Deep Neural Networks with Applications to Input Obfuscation}


\newcommand*\samethanks[1][\value{footnote}]{\footnotemark[#1]}
\author{Rajesh Jayaram\thanks{The authors Rajesh Jayaram and David Woodruff would like to thank the partial support by the National Science Foundation under Grant No. CCF-1815840. 
} \\
Carnegie Mellon University\\
\texttt{rkjayara@cs.cmu.edu} \\
\and
David P. Woodruff\samethanks \\
Carnegie Mellon University\\
\texttt{dwoodruf@cs.cmu.edu} \\
\and
Qiuyi Zhang \\
Google Brain \\
\texttt{qiuyiz@google.com}
}

\begin{document}

\maketitle
\begin{abstract}
The tremendous success of deep neural networks has motivated the need to better understand the fundamental properties of these networks, but many of the theoretical results proposed have only been for shallow networks. In this paper, we study an important primitive for understanding the meaningful input space of a deep network: \textit{span recovery}. For $k<n$, let $\AA \in \R^{k \times n}$ be the innermost weight matrix of an arbitrary feed forward neural network $M:\R^n \to \R$, so $M(x)$ can be written as $M(x) = \sigma(\AA x)$, for some network $\sigma:\R^k \to \R$. The goal is then to recover the row span of $\AA$ given only oracle access to the value of $M(x)$. We show that if $M$ is a multi-layered network with ReLU activation functions, then partial recovery is possible: namely, we can provably recover $k/2$ linearly independent vectors in the row span of $\AA$ using $\poly(n)$ non-adaptive queries to $M(x)$. Furthermore, if $M$ has differentiable activation functions, we demonstrate that \textit{full} span recovery is possible even when the output is first passed through a sign or $0/1$ thresholding function; in this case our algorithm is adaptive. Empirically, we confirm that full span recovery is not always possible, but only for unrealistically thin layers. For reasonably wide networks, we obtain full span recovery on both random networks and networks trained on MNIST data. Furthermore, we demonstrate the utility of span recovery as an attack by inducing neural networks to misclassify data obfuscated by controlled random noise as sensical inputs. 

\end{abstract}

\section{Introduction}

Consider the general framework in which we are given an unknown function $f:\R^n \to \R$, and we want to learn properties about this function given only access to the value $f(x)$ for different inputs $x$. There are many contexts where this framework is applicable, such as blackbox optimization in which we are learning to optimize $f(x)$ \cite{djolonga2013high}, PAC learning in which we are learning to approximate $f(x)$ \cite{denis1998pac}, adversarial attacks in which we are trying to find adversarial inputs to $f(x)$ \cite{szegedy2013intriguing}, or structure recovery in which we are learning the structure of $f(x)$. For example in the case when $f(x)$ is a neural network,  one might want to recover the underlying weights or architecture \cite{arora2014provable, zhang2017electron}. In this work, we consider the setting when $f(x) = M(x)$ is a neural network that admits a latent low-dimensional structure, namely $M(x) =  \sigma(\AA x)$ where $\AA \in \R^{k \times n}$ is a rank $k$ matrix for some $k < n$, and $\sigma:\R^k \to \R$ is some neural network. In this setting, we focus primarily on the goal of recovering the \textit{row-span} of the weight matrix $\AA$. We remark that all our results generalize in a straightforward manner to the case when $\AA$ is rank $r < k$. 


Span recovery of general functions $f(x) = g(\AA x)$, where $g$ is arbitrary, has been studied in some contexts, and is used to gain important information about the underlying function $f$. 
By learning Span$(\AA)$, we in essence are capturing the relevant subspace of the input to $f$; namely, $f$ behaves identically on $x$ as it does on the projection of $x$ onto the row-span of $\AA$. In statistics, this is known as  \textit{effective dimension reduction} or the \textit{multi-index model } \cite{stats,xia2002adaptive}. Another important motivation for span recovery is for designing \textit{adversarial attacks}. Given the span of $\AA$, we compute the kernel of $\AA$, which can be used to fool the function into behaving incorrectly on inputs which are perturbed by vectors in the kernel. Specifically, if $x $ is a legitimate input correctly classified by $f$ and $y$ is a large random vector in the kernel of $\AA$, then $x+y$ will be indistinguishable from noise but we will have $f(x) = f(x+y)$.   

Several works have considered the problem from an \textit{approximation-theoretic} standpoint, where the goal is to output a hypothesis function $\tilde{f}$ which approximates $f$ well on a bounded domain. For instance, in the case that $\AA \in \R^{n}$ is a rank 1 matrix and $g(\AA x)$ is a smooth function with bounded derivatives, \cite{cohen2012capturing} gives an adaptive algorithm to approximate $f$. Their results also give an approximation $\tilde{\AA}$ to $\AA$, under the assumption that $\AA$ is a stochastic vector ($\AA_i \geq 0$ for each $i$ and $\sum_i \AA_i = 1$).  Extending this result to more general rank $k$ matrices $\AA \in \R^{k \times n}$, \cite{tyagi2014learning} and \cite{fornasier2012learning} give algorithms with polynomial sample complexity to find approximations $\tilde{f}$ to twice differentiable functions $f$. However, their results do not provide any guarantee that the original matrix $\AA$ itself or a good approximation to its span will be recovered. Specifically, the matrix $\tilde{\AA}$ used in the hypothesis function $\tilde{f}(x) = \tilde{g}(\tilde{\AA} x)$ of \cite{tyagi2014learning} only has moderate correlation with the true row span of $\AA$, and always admits some constant factor error (which can translate into very large error in any subspace approximation).

Furthermore, all aforementioned works require the strong assumption that the matrix of gradients is well-conditioned (and full rank) in order to obtain good approximations $\tilde{f}$. In contrast, when $f(x)$ is a non-differentiable ReLU deep network with only mild assumptions on the weight matrices, we prove that the gradient matrix has rank at least $k/2$, which significantly strengthens span recovery guarantees since we do not make any assumptions on the gradient matrix. Finally, \cite{hardt2013robust} gives an adaptive approximate span recovery algorithm with $\poly(n)$ samples under the assumption that the function $g$ satisfies a norm-preserving condition, which is restrictive and need not (and does not) hold for the deep neural networks we consider here. 


On the empirical side, the experimental results of \cite{tyagi2014learning} for function approximation were only carried out for simple one-layer functions, such as the logistic function in one dimension (where $\AA$ has rank $k=1)$, and on linear functions $g^T (\AA x + b)$, where $g \in \R^k$ has i.i.d. Gaussian entries. Moreover, their experiments only attempted to recover approximations $\tilde{\AA}$ to $\AA$ when $\AA$ was orthonormal. In addition, \cite{fornasier2012learning} experimentally considers the approximation problem for $f$ when $f(\AA x)$ is a third degree polynomial of the input. This leaves an experimental gap in understanding the performance of span recovery algorithms on non-smooth, multi-layer deep neural networks.



When $f(x)$ is a neural network, there have been many results that allow for weight or architecture recovery under additional assumptions; however nearly all such results are for shallow networks. \cite{arora2014provable} shows that layer-wise learning can recover the architecture of random sparse neural networks. \cite{janzamin2015beating} applies tensor methods to recover the weights of a two-layer neural network with certain types of smooth activations and vector-valued output, whereas \cite{ge2019learning,bakshi19} obtain weight recovery for ReLU activations.  \cite{zhang2017electron} shows that SGD can learn the weights of two-layer neural networks with some specific activations. There is also a line of work for \textit{improperly} two-layer networks, where the algorithm outputs an arbitrary hypothesis function which behaves similarly to the network under a fixed distribution \cite{goel2017reliably,goel19b}.


Learning properties of the network can also lead to so-called model extraction attacks or enhance classical adversarial attacks on neural networks \cite{jagielski2019high}. Adversarial attacks are often differentiated into two settings, the {\it whitebox} setting where the trained network weights are known or the {\it blackbox} setting where the network weights are unknown but attacks can still be achieved on external information, such as knowledge of the dataset, training algorithm, network architecture, or network predictions. Whitebox attacks are well-studied and usually use explicit gradients or optimization procedures to compute adversarial inputs for various tasks such as classification and reinforcement learning \cite{szegedy2013intriguing, huang2017adversarial, goodfellow2014explaining}. However, blackbox attacks are more realistic and it is clear that model recovery can enhance these attacks. The work of \cite{papernot2017practical} attacks practical neural networks upon observations of predictions of adaptively chosen inputs, trains a substitute neural network on the observed data, and applies a whitebox attack on the substitute. This setting, nicknamed the practical blackbox setting \cite{chen2017zoo}, is what we work in, as we only observe adaptively chosen predictions without knowledge of the network architecture, dataset, or algorithms. We note that perhaps surprisingly, some of our algorithms are in fact entirely non-adaptive.


\subsection{Our Contributions}

In this paper, we provably show that span recovery for deep neural networks with high precision can be efficiently accomplished with $\poly(n)$ function evaluations, even when the networks have $\poly(n)$ layers and the output of the network is a scalar in some finite set. Specifically, for deep networks $M(x): \R^n \to \R$ with ReLU activation functions, we prove that we can recover a subspace $V \subset \text{Span}(A)$ of dimension at least $k/2$ with polynomially many non-adaptive queries.\footnote{We switch to the notation $M(x)$ instead of $f(x)$ to illustrate that $M(x)$ is a neural network, whereas $f(x)$ was used to represent a general function with low-rank structure.} First, we use a volume bounding technique to show that a ReLU network has sufficiently large piece-wise linear sections and that gradient information can be derived from function evaluations. Next, by using a novel combinatorial analysis of the sign patterns of the ReLU network along with facts in polynomial algebra, we show that the gradient matrix has sufficient rank to allow for partial span recovery.

\paragraph{Theorem \ref{thm:relumain} (informal)}{\it Suppose the network is given by
\[M(x) = w^T \phi(\W_1 \phi(\W_{2} \phi( \dots \W_d \phi(\AA x))\dots)\] where $A \in \R^{k \times n}$ is rank $k$, $\phi$ is the ReLU and $\W_i \in \R^{k_i \times k_{i+1}}$ are weight matrices, with $k_i $ possibly much smaller than $k$.  Then, under mild assumptions, there is a  non-adaptive  algorithm that makes $O(kn \log  k )$ queries to $M(x)$ and returns in $\poly(n,k)$-time a subspace $V \subseteq \text{span}(A)$ of dimension at least $\frac{k}{2}$ with probability $1-\delta$. }
\vspace{.1 in}

We remark that span recovery of the first weight layer is provably feasible even in the surprising case when the neural network has many ``bottleneck" layers with small $O(\log(n))$ width. Because this does not hold in the linear case, this implies that the non-linearities introduced by activations in deep learning allow for much more information to be captured by the model. Moreover, our algorithm is non-adaptive, which means that the points $x_i$ at which $M(x_i)$ needs to be evaluated can be chosen in advance and span recovery will succeed with high probability. This has the benefit of being parallelizable, and possibly more difficult to detect when being used for an adversarial attack. In addition, we note that this result generalize to the case when $\AA$ is rank $r < k$, in which setting our guarantee will instead be that we recover a subspace of dimension at least $\frac{r}{2}$ contained within the span of $A$.

In contrast with previous papers, we do not assume that the gradient matrix has large rank; rather our main focus and novelty is to prove this statement under minimal assumptions. We require only two mild assumptions on the weight matrices. The first assumption is on the \textit{orthant probabilities} of the matrix $\AA$, namely that the distribution of sign patterns of a vector $\AA g$, where $g \sim \mathcal{N}(0,\II_n)$, is not too far from uniform. Two examples of matrices which satisfy this property are random matrices and matrices with nearly orthogonal rows. The second assumption is a non-degeneracy condition on the matrices $\W_i$, which enforces that products of rows of the matrices $\W_i$ result in vectors with non-zero coordinates.

Our next result is to show that \textit{full} span recovery is possible for thresholded networks $M(x)$ with twice differentiable activation functions in the inner layers, when the network has a $0/1$ threshold function in the last layer and becomes therefore non-differentiable, i.e., $M(x) \in \{0,1\}$. Since the activation functions can be arbitrarily non-linear, our algorithm only provides an approximation of the true subspace $\Span(\AA)$, although the distance between the subspace we output and $\Span(\AA)$ can be made exponentially small.  We need only assume bounds on the first and second derivatives of the activation functions, as well as the fact that we can find inputs $x \in \R^n$ such that $M(x) \neq 0$ with good probability, and that the gradients of the network near certain points where the threshold evaluates to one are not arbitrarily small. We refer the reader to Section \ref{sec:smooth} for further details on these assumptions. Under these assumptions, we can apply a novel gradient-estimation scheme to approximately recover the gradient of $M(x)$ and the span of $\AA$. 
\paragraph{Theorem \ref{thm:smoothmain} (informal)}{ \it 
	Suppose we have the network
	$M(x) = \tau(\sigma(\AA x))$, where $\tau :\R \to \{0,1\}$ is a threshold function and $\sigma: \R^k \to \R$ is a neural network with twice differentiable activation functions, and such that $M$ satisfies the conditions sketched above (formally defined in Section \ref{sec:smooth}).  Then there is an algorithm that runs in poly$(N)$ time, making at most poly$(N)$ queries to $M(x)$, where $N = \poly(n,k, \log(\frac{1}{\epsilon}),\log(\frac{1}{\delta}))$, and returns with probability $1-\delta$ a subspace $V\subset \R^n$ of dimension $k$ such that for any $x \in V$, we have 
	\[	\|\P_{\text{Span}(\AA) } x\|_2 \geq (1- \epsilon) \|x\|_2	\]
where $\P_{\text{Span}(\AA)}$ is the orthogonal projection onto the span of $\AA$. 
	}\vspace{.1 in}

Empirically, we verify our theoretical findings by running our span recovery algorithms on randomly generated networks and trained networks. First, we confirm that full recovery is not possible for all architectures when the network layer sizes are small. This implies that the standard assumption that the gradient matrix is full rank does not always hold. However, we see that realistic network architectures lend themselves easily to full span recovery on both random and trained instances. We emphasize that this holds even when the network has many small layers, for example a ReLU network that has 6 hidden layers with $[784, 80, 40, 30, 20, 10]$ nodes, in that order, can still admit full span recovery of the rank $80$ weight matrix.

Furthermore, we observe that we can effortlessly apply input obfuscation attacks after a successful span recovery and cause misclassifications by tricking the network into classifying noise as normal inputs with high confidence. Specifically, we can inject large amounts of noise in the null space of $\AA$ to arbitrarily obfuscate the input without changing the output of the network. We demonstrate the utility of this attack on MNIST data, where we use span recovery to generate noisy images that are classified by the network as normal digits with high confidence. We note that this veers away from traditional adversarial attacks, which aim to drastically change the network output with humanly-undetectable changes in the input. In our case, we attempt the arguably more challenging problem of drastically changing the input without affecting the output of the network.

\section{Preliminaries}
\paragraph{Notation}
For a vector $x \in \R^k$, the \textit{sign pattern} of $x$, denoted \text{sign}$(x) \in  \{0,1\}^k$, is the indicator vector for the non-zero coordinates of $x$. Namely, $\text{sign}(x)_i =1$ if $x_i \neq 0$ and $\text{sign}(x)_i =0$ otherwise. Given a matrix $\AA \in \R^{n \times m}$, we denote its singular values as $\sigma_{\min}(\AA) = \sigma_{\min\{n,m\} },\dots,\sigma_1(\AA) = \sigma_{\max}(\AA)$. The \textit{condition number} of $\AA$ is denoted $\kappa(\AA) = \sigma_{\max}(\AA)/\sigma_{\min}(\AA)$. We let $\II_n \in \R^{n \times n}$ denote the $n \times n$ identity matrix. For a subspace $V \subset \R^n$, we write $\P_V \in \R^{n \times n}$ to denote the orthogonal projection matrix onto $V$. If $\mu \in \R^n$ and $\Sigma \in \R^{n \times n}$ is a PSD matrix, we write $\mathcal{N}(\mu,\Sigma)$ to denote the multi-variate Gaussian distribution with mean $\mu$ and covariance $\Sigma$.


\paragraph{Gradient Information} For any function $f(x) = g(\AA x)$, note that $\nabla f(x) = \AA^\top g(\AA x)$ must be a vector in the row span of $\AA$. Therefore, span recovery boils down to understanding the span of the gradient matrix as $x$ varies. Specifically, note that if we can find points $x_1,.., x_k$ such that $\{\nabla f(x_i)\}$ are linearly independent, then the full span of $\AA$ can be recovered using the span of the gradients. To our knowledge, previous span recovery algorithms heavily rely on the assumption that the gradient matrix is full rank and in fact well-conditioned. Specifically, for some distribution $\mathcal{D}$, it is assumed that $H_f = \int_{x \sim \mathcal{D}} \nabla f(x) \nabla f(x)^\top \, dx$ is a rank $k$ matrix with a minimum non-zero singular value bounded below by $\alpha$ and the number of gradient or function evaluations needed depends inverse polynomially in $\alpha$. In contrast, in this paper, when $f(x)$ is a neural network, we provably show that $H_f$ is a matrix of sufficiently high rank or large minimum non-zero singular value under mild assumptions, using tools in polynomial algebra.

\section{Deep Networks with ReLU activations}\label{sec:relu}
In this section, we demonstrate that partial span recovery is possible for deep ReLU networks. Specifically, we consider neural networks $M(x): \R^n \to \R$ of the form 
\[M(x) = w^T \phi(\W_1\phi(\W_{2}\phi( \dots \W_d \phi(\AA x))\dots)\]
where $\phi(x)_i = \max\{x_i,0\}$ is the RELU (applied coordinate-wise to each of its inputs), and $\W_i \in \R^{k_i \times k_{i+1}}$, and $w \in \R^{k_d}$, and $\AA$ has rank $k$. We note that $k_i$ can be much smaller than $k$. In order to obtain partial span recovery, we make the following assumptions parameterized by a value $\gamma>0$ (our algorithms will by polynomial in $1/\gamma$):
\begin{itemize}
\item \textbf{Assumption 1:} For every sign pattern $S  \in  \{0,1\}^k$, we have $\mathbf{Pr}_{g \sim \mathcal{N}(0,I_n)}\left[ \text{sign}( \phi(\AA g)) = S \right] \geq \gamma / 2^k$. 
    \item \textbf{Assumption 2:} For any $S_1,\dots,S_d \neq \emptyset$ where $S_i \subseteq [k_i]$, we have $w^T \left( \prod_{i=1}^d (\W_i)_{S_i} \right) \in \R^k$ is entry-wise non-zero. Here $(\W_i)_{S_i}$ is the matrix with the rows $j \notin S_i$ set equal to $0$.  Moreover, we assume
    \[ \mathbf{Pr}_{g \sim \mathcal{N}(0,I_k)} \left[ M(g) = 0 \right] \leq \frac{\gamma}{8}. \] 
   \end{itemize}
   
   Our first assumption is an assumption on the \textit{orthant probabilities} of the distribution $\AA g$. Specifically, observe that $\AA g \in \R^k$ follows a multi-variate Gaussian distribution with covariance matrix $\AA \AA^T$. 
   Assumption $1$ then states that the probability that a random vector $x \sim \mathcal{N}(0,\AA \AA^T)$ lies in a certain orthant of $\R^k$ is not too far from uniform. We remark that orthant probabilities of multivariate Gaussian distributions are well-studied (see e.g., \cite{miwa2003evaluation,10.2307/2991294,abrahamson1964orthant}), and thus may allow for the application of this assumption to a larger class of matrices. In particular, we show it is satisfied by both random matrices and orthogonal matrices. 
   Our second assumption is a non-degeneracy condition on the weight matrices $\W_i$ -- namely, that products of $w^T$ with non-empty sets of rows of the $\W_i$ result in  entry-wise non-zero vectors. In addition, Assumption $2$ requires that the network is non-zero with probability that is not arbitrarily small, otherwise we cannot hope to find even a single $x$ with $M(x) \neq 0$. 
   
   In the following lemma, we demonstrate that these conditions are satisfied by randomly initialized networks, even when the entries of the $\W_i$ are not identically distributed. The proof can be found in Appendix \ref{app:sec3}.

\begin{lemma}\label{lem:randommatrix}
If $\AA \in \R^{k \times n}$ is an arbitrary matrix with orthogonal rows, or if $n > \Omega(k^3)$ and $\AA$ has entries that are drawn i.i.d. from some sub-Gaussian distribution $\mathcal{D}$ with expectation $0$, unit variance, and constant sub-Gaussian norm $\|\mathcal{D}\|_{\psi_2} = \sup_{p \geq 1} p^{-1/2} \left(\mathbf{E}_{X \sim \mathcal{D} } |X|^p \right)^{1/p}$ then with probability at least $1-e^{-k^2}$, $\AA$ satisfies Assumption 1 with $\gamma \geq 1/2$. 

Moreover, if the weight matrices $w,\W_1,\W_2,...,\W_d$ with $\W_i \in \R^{k_i \times k_{i+1}}$ have entries that are drawn independently (and possibly non-identically) from continuous symmetric distributions, and if $k_i \geq \log(\frac{16 d}{\delta \gamma})$ for each $i\in [d]$, then Assumption $2$ holds with probability $1-\delta$. 
\end{lemma}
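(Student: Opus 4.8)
The plan is to handle the two assumptions, and the two cases of Assumption~1, separately. If $\AA$ has orthogonal rows then, being rank $k$, its rows are nonzero and $\AA\AA^\top$ is diagonal, so the coordinates of $\AA g$ are independent centered Gaussians; each is positive with probability exactly $1/2$ and equals $0$ only on a measure-zero set, so every sign pattern occurs with probability exactly $2^{-k}$ and Assumption~1 holds with $\gamma=1$ deterministically. For random sub-Gaussian $\AA$ I would first show that $\tfrac1n\AA\AA^\top$ concentrates at $I_k$: writing $\AA\AA^\top=\sum_{\ell=1}^n c_\ell c_\ell^\top$ with $c_\ell\in\R^k$ the i.i.d.\ mean-zero isotropic sub-Gaussian columns of $\AA$, standard sample-covariance concentration gives $\|\tfrac1n\AA\AA^\top-I_k\|_{\mathrm{op}}\lesssim\sqrt{k/n}$ except with probability $e^{-\Omega(k^2)}$, which is $\poly(1/k)$-small once $n=\Omega(k^3)$.

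The crux is then to turn this operator-norm closeness into multiplicative closeness of every orthant probability to $2^{-k}$. Using scale-invariance of sign patterns I would write $\AA g\stackrel{d}{=}\sqrt n\,(I+\Delta)^{1/2}z$ with $z\sim\N(0,I_k)$, set $F=(I+\Delta)^{1/2}-I$ with $\beta:=\|F\|_{\mathrm{op}}=O(\|\Delta\|_{\mathrm{op}})$, and for a fixed target pattern $\epsilon$ bound
\[
\Pr[\mathrm{sign}((I+F)z)=\epsilon]\ \ge\ \Pr[\mathrm{sign}(z)=\epsilon,\ \min_i|z_i|>\theta]-\Pr[\|Fz\|_\infty\ge\theta].
\]
The first term factorizes (with $Z\sim\N(0,1)$) as $(\Pr[Z>\theta])^k\ge 2^{-k}(1-O(\theta))^k\ge\tfrac34 2^{-k}$ for $\theta=O(1/k)$, while the second is at most $2k\,e^{-\theta^2/(2\beta^2)}$ by a Gaussian tail and a union bound, which drops below $\tfrac14 2^{-k}$ once $\beta=O(k^{-3/2})$. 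Since $\beta\lesssim\sqrt{k/n}$, matching these two requirements forces $n$ to exceed a fixed polynomial in $k$, consistent with the hypothesis $n=\Omega(k^3)$, and yields $\gamma\ge 1/2$. This step is the main obstacle: the target $2^{-k}$ is exponentially small, so the sign-flip error term must be driven below it, which is exactly what dictates how small $\Delta$---and hence how large $n$---must be.

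For Assumption~2 only the weights are random. Fix nonempty $S_1,\dots,S_d$ and a coordinate $j$; expanding the product,
\[
\Big(w^\top\textstyle\prod_{i=1}^d(\W_i)_{S_i}\Big)_j=\sum_{a_1\in S_1,\dots,a_d\in S_d}w_{a_1}(\W_1)_{a_1a_2}(\W_2)_{a_2a_3}\cdots(\W_d)_{a_dj},
\]
a sum of \emph{distinct} monomials, since different index paths use disjoint sets of weight variables; hence it is a polynomial that is not identically zero whenever every $S_i\neq\emptyset$. Because the entries are drawn from continuous distributions, each such polynomial vanishes with probability $0$, and a union bound over the finitely many choices of $(S_1,\dots,S_d,j)$ keeps the total failure probability $0$, so the entry-wise non-zero condition holds almost surely with no constraint on the $k_i$.

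The width condition and the factor $1-\delta$ arise only from the second part of Assumption~2. Because $w$ is continuous and independent of the nonnegative last activation $z_d$, we have $M(g)=0$ almost surely iff $z_d=0$. Writing $z_i$ for the output of the $i$-th ReLU layer, $\{z_d=0\}\subseteq\{z_0=0\}\cup\bigcup_i\{z_i=0,\,z_{i-1}\neq0\}$; conditioned on a fixed nonzero input, the $k_i$ pre-activations of layer $i$ are independent symmetric variables (independent rows, symmetric entries), each nonpositive with probability $1/2$, so $\Pr[z_i=0\mid z_{i-1}\neq0]\le 2^{-k_i}$, while $\Pr[z_0=0]=2^{-k}$. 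Taking expectations over the weights, $\ex{\Pr_g[M(g)=0]}=\Pr_{g,W}[M(g)=0]\le 2^{-k}+\sum_{i=1}^d 2^{-k_i}$, which is at most $\delta\gamma/8$ once $k_i\ge\log(16d/(\delta\gamma))$ (so that $2^{-k_i}\le\delta\gamma/(16d)$). Markov's inequality over the weight randomness then gives $\Pr_W[\Pr_g[M(g)=0]>\gamma/8]\le\delta$. The only delicate points here are the reduction $M(g)=0\Leftrightarrow z_d=0$ and the per-layer symmetry yielding the clean $2^{-k_i}$ factor.
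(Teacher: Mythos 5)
Your handling of the orthogonal-rows case (diagonal covariance gives independent coordinates, hence every orthant has probability exactly $2^{-k}$ and $\gamma=1$) is correct, and is in fact cleaner than the paper's reduction to a condition-number bound, which strictly speaking requires equal row norms. Your Assumption~2 argument also essentially matches the paper's: the path-monomial expansion gives a not-identically-zero polynomial in the weights, so entrywise nonzeroness holds almost surely, and the Markov-plus-layerwise-symmetry argument gives the second part. (One caveat you share with the paper: your claim $\Pr[z_0=0]=2^{-k}$ treats the coordinates of $\AA g$ as independent, which holds for orthogonal $\AA$, but for a general fixed $\AA$ this is an orthant probability of $\mathcal{N}(0,\AA\AA^\top)$ that can be as large as a constant; the paper's product $\prod_i(1-2^{-k_i})$ silently omits this first-layer term as well.)

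The genuine gap is in the random-$\AA$ case of Assumption~1. Your sign-preservation argument makes an \emph{additive} comparison against the exponentially small target $2^{-k}$: you need $\Pr[\|Fz\|_\infty\geq\theta]\leq\frac14 2^{-k}$, and since keeping $(\Pr[Z>\theta])^k\geq\frac34 2^{-k}$ forces $\theta=O(1/k)$, the Gaussian tail bound forces $\beta=\|F\|_{\mathrm{op}}=O(k^{-3/2})$. Since $\beta=\Theta(\|\Delta\|_{\mathrm{op}})$ and even in expectation $\|\Delta\|_{\mathrm{op}}\asymp\sqrt{k/n}$, this forces $n=\Omega(k^4)$; moreover, at the confidence level $1-e^{-\Omega(k^2)}$ that the lemma demands, sample-covariance concentration only gives $\|\Delta\|_{\mathrm{op}}\lesssim k/\sqrt{n}$ (the deviation parameter must be $t\asymp k$, so your stated bound $\sqrt{k/n}$ at that probability is also off), pushing the requirement to $n=\Omega(k^5)$. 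So your assertion that the constraints are ``consistent with the hypothesis $n=\Omega(k^3)$'' is incorrect: the argument as written does not prove the lemma as stated. The paper avoids this loss by comparing \emph{multiplicatively} rather than additively: after rescaling so that $\kappa(\AA)\leq 1+O(1/k)$, it lower-bounds the ratio of the density of $\AA g$ to the standard Gaussian density pointwise by $1/2$ on a ball of radius $O(\sqrt{k})$, so every orthant probability is at least $\frac12\bigl(2^{-k}-e^{-2k}\bigr)$. A multiplicative bound only needs $\|\Delta\|_{\mathrm{op}}=O(1/k)$, a full factor of $\sqrt{k}$ weaker than your requirement, and this is exactly what lets $n=\Omega(k^3)$ suffice (at failure probability $e^{-\Omega(k)}$; at the $1-e^{-k^2}$ level even the paper's own argument really needs $n=\Omega(k^4)$, a looseness in the paper). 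Within your additive framework the $k^{-3/2}$ threshold is unavoidable, since $\theta\lesssim 1/k$ and a tail exponent of order $k$ are both forced, so repairing the proof requires switching to the density-ratio comparison.
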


\subsection{Algorithm for Span Recovery}

\begin{algorithm}[h]
\SetKwInOut{Input}{Input}
\caption{Span Recovery with Non-Adaptive Gradients\label{fig:algReLU}}
  \Input{function $M(x): \R^n \to \R$, $k$: latent dimension , $\gamma$: probability parameter\\}
  Set $r = O(k \log(k)/\gamma)$ \\
  \For{$i = 0, \dots, r$}{
    Generate random Gaussian vector $g^i \sim  \mathcal{N}(0,I_n)$.\\
  \textbf{Compute Gradient:}
    $z^i = \nabla M(g^i)$ \Comment{Lemma~\ref{lem:grad}}\\
    }

\Return{Span$\left(z^1,z^2,\dots,z^r \right)$}
\end{algorithm}

The algorithm for recovery is  given in Algorithm \ref{fig:algReLU}. Our algorithm computes the gradient $\nabla M(g_i)$ for different Gaussian vectors $g_i \sim \mathcal{N}(0,I_k)$, and returns the subspace spanned by these gradients. To implement this procedure, we must show that it is possible to compute gradients via the perturbational method (i.e. finite differences), given only oracle queries to the network $M$. Namely, we firstly must show that if $g\sim \mathcal{N}(0,\II_n)$ then $\nabla M(g)$ exists, and moreover, that $\nabla M(x)$ exists for all $x \in \mathcal{B}_\epsilon(g)$, where $\mathcal{B}_\epsilon(g)$ is a ball of radius $\epsilon$ centered at $g$, and $\epsilon$ is some value with polynomial bit complexity which we can bound. To demonstrate this, we show that for any fixing of the sign patterns of the network, we can write the region of $\R^n$ which satisfies this sign pattern and is $\epsilon$-close to one of the $O(dk)$ ReLU thresholds of the network as a linear program. We then show that the feasible polytope of this linear program is contained inside a Euclidean box in $\R^n$, which has one side of length $\eps$. Using this containment, we upper bound the volume of the polytope in $\R^n$ which is $\eps$ close to each ReLU, and union bound over all sign patterns and ReLUs to show that the probability that a Gaussian lands in one of these polytopes is exponentially small. The proof of the following Lemma can be found in Appendix \ref{app:sec3}.


\begin{lemma}\label{lem:grad}
There is an algorithm which, given $g\sim \mathcal{N}(0,I_k)$, with probability $1-\exp(-n^c)$ for any constant $c>1$ (over the randomness in $g$), computes $\nabla M(g) \in \R^n$ with $O(n)$ queries to the network, and in poly$(n)$ runtime. 
\end{lemma}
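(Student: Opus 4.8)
The plan is to exploit the fact that $M$ is piecewise linear: the hyperplanes $\{x : \text{the preactivation of some neuron equals } 0\}$ partition $\R^n$ into open polyhedral regions, and on the interior of each such region $M$ agrees with a single affine function, so $\nabla M$ exists, is constant there, and equals the vector of linear coefficients of that affine piece. Consequently, if I can certify that the sampled point $g \sim \mathcal{N}(0,I_n)$ lies at Euclidean distance at least $\epsilon$ from the union $\partial$ of all neuron-threshold hyperplane pieces, then the ball $\mathcal{B}_\epsilon(g)$ is entirely contained in one linear region, $M$ is \emph{exactly} affine on $\mathcal{B}_\epsilon(g)$, and I can read off $\nabla M(g)$ with no approximation error by $O(n)$ one-sided finite differences. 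The whole lemma thus reduces to a single anticoncentration statement: a Gaussian avoids the $\epsilon$-neighborhood of $\partial$ with probability $1-\exp(-n^c)$, for an $\epsilon$ whose description has only polynomially many bits.

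For the measure bound I would fix a neuron $\nu$ (there are $O(dk)$ of them) together with a sign pattern $S$ of all neurons upstream of $\nu$. Restricted to the region $R_S$ where that pattern holds, every upstream ReLU acts as a fixed linear map, so the preactivation of $\nu$ is an affine function $x \mapsto \langle a_{\nu,S}, x\rangle + b_{\nu,S}$ (with $a_{\nu,S}\neq 0$ generically, and guaranteed by the non-degeneracy of Assumption 2). If $g$ lies within $\epsilon$ of any boundary point $z$ of $\nu$ inside $\overline{R_S}$, then since $\langle a_{\nu,S}, z\rangle + b_{\nu,S}=0$ we get $|\langle a_{\nu,S}, g\rangle + b_{\nu,S}| \le \|a_{\nu,S}\|\,\|g-z\| < \epsilon\|a_{\nu,S}\|$, so $g$ lies in the full-space slab $\{x : |\langle a_{\nu,S}, x\rangle + b_{\nu,S}| \le \epsilon\|a_{\nu,S}\|\}$. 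This is exactly the polytope-in-a-box picture sketched before the lemma: thickness $\epsilon$ in the direction $a_{\nu,S}$. The point is that the Gaussian measure of such a slab is dimension-free: writing $\hat a = a_{\nu,S}/\|a_{\nu,S}\|$, the scalar $\langle \hat a, g\rangle$ is a standard one-dimensional Gaussian, so $\Pr_g[\,g \in \text{slab}\,] \le \epsilon$.

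The bad event is covered by the union of these slabs over all $O(dk)$ neurons and all $2^{O(dk)}$ sign patterns of upstream neurons, so $\Pr_g[\,\mathrm{dist}(g,\partial) < \epsilon\,] \le O(dk)\,2^{O(dk)}\,\epsilon$. Since the theorem permits $d=\poly(n)$ and $k<n$, we have $dk=\poly(n)$, and taking $\epsilon = \exp(-n^{c}-C\,dk)$ for a suitable constant $C$ makes this at most $\exp(-n^c)$; crucially $\log(1/\epsilon)=\poly(n)$, so $\epsilon$ is representable with polynomially many bits. Conditioned on this good event, every $y$ with $\|y-g\|_2<\epsilon$ has $\mathrm{dist}(y,\partial)\ge \epsilon-\|y-g\|_2>0$, so $\mathcal{B}_\epsilon(g)$ lies in one linear region. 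Choosing a step $\delta<\epsilon$ of polynomial bit length (e.g. $\delta=\epsilon/2$) and querying $M(g)$ and $M(g+\delta e_j)$ for $j=1,\dots,n$, exactness of the affine piece gives $M(g+\delta e_j)-M(g)=\delta\,(\nabla M(g))_j$, recovering $\nabla M(g)$ exactly with $n+1=O(n)$ queries and $\poly(n)$ arithmetic on polynomial-bit numbers.

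The main obstacle is the measure bound, and specifically affording the $2^{O(dk)}$-fold union over sign patterns. The reason it goes through is that each slab contributes a \emph{dimension-free} measure $\le\epsilon$, so an exponentially small but still polynomial-bit $\epsilon$ beats the union; a crude volume-times-max-density estimate would instead incur a spurious $R^{n-1}$ factor and fail. A secondary technical point is handling degenerate neurons with $a_{\nu,S}=0$, which either contribute no boundary inside $R_S$ (when $b_{\nu,S}\neq 0$) or are excluded by the non-degeneracy assumption.
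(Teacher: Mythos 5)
Your proof is correct and follows the paper's skeleton---union over (upstream sign pattern, neuron) pairs, show a Gaussian lands within an exponentially thin neighborhood of any ReLU threshold with only exponentially small probability, then recover the gradient exactly by finite differences on the affine piece containing $\mathcal{B}_\epsilon(g)$---but your key probability estimate is genuinely different and cleaner. The paper bounds the \emph{Lebesgue volume} of each bad region (the polytope intersected with a ball of radius $(nd)^{10c}$, rotated into a box) and multiplies by the trivial density bound $1$, incurring factors $(nd)^{10nc}$ and $2^{n^C}$ (the latter from the bit-complexity bound on the weight entries, needed to control $\|y\|_2$), which it then absorbs by taking the value-gap $\eta$ exponentially small; it further converts this value-gap into a Euclidean radius via the spectral-norm product $\beta = (n^2 2^{n^C})^{d+1}$, setting $\epsilon = \eta/\beta$. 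Your slab argument short-circuits all of this: the bad event lies in $\{x : |\langle \hat{a}_{\nu,S}, x\rangle| \le \epsilon \|a_{\nu,S}\|\}$, whose Gaussian measure is at most $\epsilon$ by the one-dimensional marginal---dimension-free and independent of the weight magnitudes, of $\|g\|_2$, and of any Lipschitz constant---so the probability bound needs no assumptions on the weights at all. Likewise your recovery step (coordinate directions with exact differences) is simpler than the paper's, which perturbs in $n$ random Gaussian directions and solves a linear system for $\nabla M(g)$; both are valid since $M$ is exactly affine on the ball. One correction: your side remark that a ``crude volume-times-max-density estimate would \ldots fail'' is wrong---that is precisely the paper's argument, and it succeeds because the spurious $(nd)^{O(nc)}$ factor can be absorbed into $\eta$ while keeping $\log(1/\eta) = \poly(n)$; your route is cleaner, not uniquely viable. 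Finally, the degenerate case $a_{\nu,S} = 0$ that you flag is glossed over by the paper as well (its vector $y$ could in principle vanish); as you note, such a neuron has identically zero preactivation on its open region and so creates no non-differentiability inside it, so those pairs can simply be dropped from the union.
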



Now observe that the gradients of the network lie in the row-span of $\AA$. To see this, for a given input $x \in \R^n$, let $S_0(x) \in \R^k$ be the sign pattern of $\phi(\AA x) \in \R^{k}$, and more generally define $S_i(x)\in \R^{k_i}$ via
\[S_i(x) = \text{sign}\Big(\phi(\W_i \phi (\W_{i+1}\phi (\dots \W_d \phi(Ax) )\dots  )\Big)\] Then $\nabla M(x) = ( w^T \cdot ( \prod_{i=1}^d (\W_i)_{S_i}) ) )\AA_{S_0}$, which demonstrates the claim that the gradients lie in the row-span of $\AA$. Now define $z^i = \nabla M(g^i)$ where $g^i \sim \mathcal{N}(0,\II_n)$, and let $\ZZ$ be the matrix where the $i$-th row is equal to $z_i$. We will prove that $\ZZ$ has rank at least $k/2$. 
To see this, first note that we can write $\ZZ =\V \AA$, where $\V$ is some matrix such that the non-zero entries in the $i$-th row are precisely the coordinates in the set $S_0^i$, where $S_j^i =S_j(g^i)$ for any $j = 0,1,2,\dots,d$ and $i=1,2,\dots,r$. We first show that $\V$ has rank at least $ck$ for a constant $c> 0$. To see this, suppose we have computed $r$ gradients so far, and the rank of $\V$ is less than $ck$ for some $0 < c < 1/2$. Now $\V \in \R^{r \times k}$ is a fixed rank-$ck$ matrix, so the span of the matrix can be expressed as a linear combination of some fixed subset of $ck$ of its rows. We use this fact to show in the following lemma that the set of all possible sign patterns obtainable in the row span of $\V$ is much smaller than $2^k$. Thus a gradient $z^{r+1}$ with a uniform (or nearly uniform) sign pattern will land outside this set with good probability, and thus will increase the rank of $\ZZ$ when appended. 

\begin{lemma}\label{lem:signs}
Let $\V \in \R^{r \times k}$ be a fixed at most rank $ck$ matrix for $c \leq 1/2$. Then the number of sign patterns $S \subset [k]$ with at most $k/2$ non-zeros spanned by the rows of $\V$ is at most $\frac{2^k}{\sqrt{k}}$. In other words, the set $\mathcal{S}(\V) = \{\text{sign}(w) \; | \; w \in \text{span}(\V), \text{nnz}(w) \leq \frac{k}{2} \} $ has size at most $\frac{2^k}{\sqrt{k}}$.  
\end{lemma}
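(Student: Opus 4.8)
The plan is to reinterpret $\mathcal{S}(\V)$ as the set of supports of vectors in the row space $U = \text{span}(\V) \subseteq \R^k$, which has dimension $d = \rank(\V) \le ck \le k/2$, and then to exhibit an injection from these ``small'' supports into the $\lceil k/2\rceil$-element subsets of $[k]$. First I would fix a basis $v^{(1)},\dots,v^{(d)}$ of $U$ and, for each coordinate $i \in [k]$, set $V_i = (v^{(1)}_i,\dots,v^{(d)}_i) \in \R^d$ to be the vector of $i$-th coordinates. Every $w \in U$ can be written $w = \sum_j a_j v^{(j)}$ for a unique coefficient vector $a \in \R^d$, and then $w_i = \langle a, V_i\rangle$; hence any realized support $S = \text{sign}(w)$ equals $S = \{i : \langle a, V_i\rangle \ne 0\}$, so that its complement is $\bar S = \{i : a \perp V_i\}$, where $a$ is the coefficient vector of a witness $w$ with $\text{sign}(w) = S$.

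The key structural step is to show that $\bar S$ is ``closed.'' Writing $W = \text{span}\{V_i : i \in \bar S\} \subseteq \R^d$, I claim $\bar S = \{i \in [k] : V_i \in W\}$. The inclusion $\subseteq$ is immediate; for $\supseteq$, observe that the witness $a$ satisfies $a \perp V_i$ for all $i \in \bar S$, hence $a \perp W$, so any $i$ with $V_i \in W$ also has $\langle a, V_i\rangle = 0$ and thus $i \in \bar S$. Consequently $\bar S$ is completely determined by the subspace $W$, and $W$ is in turn recovered as $\text{span}\{V_i : i \in T\}$ for any $T \subseteq \bar S$ that contains a maximal linearly independent subset (a basis) of $\{V_i : i \in \bar S\}$.

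This recoverability yields the injection. For each $S \in \mathcal{S}(\V)$ we have $|\bar S| = k - |S| \ge k/2$, while a basis of $W$ has size $\rho = \dim W \le d \le k/2 \le \lceil k/2\rceil \le |\bar S|$. I would then choose $\phi(S) \subseteq \bar S$ of size exactly $\lceil k/2\rceil$ containing a basis of $W$ (pad a basis arbitrarily inside $\bar S$). From $\phi(S)$ we reconstruct $W = \text{span}\{V_i : i \in \phi(S)\}$, then $\bar S = \{i : V_i \in W\}$, and finally $S$ itself; hence $\phi$ is an injection into the family of $\lceil k/2\rceil$-element subsets of $[k]$. Therefore $|\mathcal{S}(\V)| \le \binom{k}{\lceil k/2\rceil}$, and a standard Stirling estimate gives $\binom{k}{\lceil k/2\rceil} \le 2^k/\sqrt{k}$, which finishes the argument.

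I expect the main obstacle to be the closedness step $\bar S = \{i : V_i \in W\}$: this is exactly where the linear-algebraic structure of the row span is used, and it is what lets a potentially large set $\bar S$ be encoded by a bounded-size independent subset. Equivalently, this is the statement that the realizable supports are precisely the complements of the flats of the matroid represented by the columns $V_1,\dots,V_k$, and the injection is the map sending such a flat to a size-$\lceil k/2\rceil$ superset of one of its bases. Once this is in place, the size bookkeeping (checking $\rho \le \lceil k/2\rceil \le |\bar S|$ throughout) and the binomial estimate are routine.
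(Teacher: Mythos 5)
Your proof is correct, but it takes a genuinely different route from the paper. The paper parametrizes vectors in the row span by $ck$ coefficients, views the $k$ coordinates as degree-$1$ polynomials in those variables, and invokes a black-box theorem from polynomial algebra (Theorem 4.1 of the cited work on zero patterns of polynomial systems) to bound the number of patterns with at most $k/2$ non-zeros by $\binom{ck+k/2}{ck} \leq \binom{k}{k/2} \leq 2^k/\sqrt{k}$. You instead exploit the linearity directly: the complement of any realizable support is a flat of the matroid represented by the column vectors $V_1,\dots,V_k \in \R^d$ (your closedness step $\bar{S} = \{i : V_i \in W\}$ is correct, since the witness coefficient vector is orthogonal to all of $W$), and padding a basis of $W$ inside $\bar{S}$ gives an injection into the $\lceil k/2\rceil$-subsets of $[k]$, yielding $|\mathcal{S}(\V)| \leq \binom{k}{\lceil k/2\rceil} \leq 2^k/\sqrt{k}$. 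The size bookkeeping ($\dim W \leq ck \leq k/2 \leq \lceil k/2 \rceil \leq |\bar{S}|$) and the injectivity argument (reconstruction depends only on $\phi(S)$) both check out. What each approach buys: yours is self-contained and elementary, needing no external citation, but it is tied to degree-$1$ structure — the flat/closure argument has no analogue for higher-degree polynomial maps; the paper's approach is shorter given the citation and would generalize verbatim to nonlinear parametrizations, which is in the spirit of the "polynomial algebra" toolkit the authors advertise. The final bounds are essentially identical (both are the central binomial coefficient up to rounding).
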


\begin{proof}
Any vector $w$ in the span of the rows of $\V$ can be expressed as a linear combination of at most $ck$ rows of $\V$. So create a variable $x_i$ for each coefficient $i \in [ck]$ in this linear combination, and let $f_j(x)$ be the linear function of the $x_i's$ which gives the value of the $j$-th coordinate of $w$. Then $f(x) = (f_1(x),\dots,f_k(x))$ is a $k$-tuple of polynomials, each in $ck$-variables, where each polynomial has degree $1$. By Theorem $4.1$ of \cite{hall2010expected}, it follows that the number of sign patterns which contain at most $k/2$ non-zero entries is at most $\binom{ck + k/2}{ck}$. Setting $c \leq1/2$, this is at most $\binom{k}{k/2} \leq \frac{2^k}{\sqrt{k}}$. 
\end{proof}

\noindent
We now state the main theorem of this section, the proof of which can be found in Appendix \ref{app:sec3}. 

\begin{theorem} \label{thm:relumain} Suppose the network
$M(x) = w^T \phi(W_1 \phi(W_{2} \phi( \dots W_d \phi(Ax))\dots)$, where $\phi$ is the ReLU, satisfies \textbf{Assumptions 1 and 2}. Then the algorithm given in Figure \ref{fig:algReLU} makes $O(kn \log (k/\delta)/\gamma)$ queries to $M(x)$ and returns in $\poly(n,k,1/\gamma,\log(1/\delta))$ time a subspace $V \subseteq \text{span}(A)$ of dimension at least $\frac{k}{2}$ with probability $1-\delta$. 
\end{theorem}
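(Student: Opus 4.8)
The plan is to combine the three lemmas already established (Lemma~\ref{lem:randommatrix}, Lemma~\ref{lem:grad}, and Lemma~\ref{lem:signs}) into a coherent argument showing that after $r = O(k\log(k/\delta)/\gamma)$ non-adaptive Gaussian queries, the gradient matrix $\ZZ$ achieves rank at least $k/2$. First, by Lemma~\ref{lem:grad}, each of the $r$ gradients $z^i = \nabla M(g^i)$ can be computed with high probability using $O(n)$ queries, giving the overall query bound $O(kn\log(k/\delta)/\gamma)$ after a union bound over the $r$ gradient computations (the failure probability $\exp(-n^c)$ is negligible). Since each gradient lies in $\text{span}(\AA)$ as shown in the discussion preceding Lemma~\ref{lem:signs}, the returned subspace is automatically contained in $\text{span}(\AA)$; the entire content is to lower-bound its dimension.

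\textbf{The rank growth argument.} I would proceed by a potential/induction argument on the rank of the coefficient matrix $\V$, where $\ZZ = \V\AA$. Suppose after some number of queries the rank of $\V$ is exactly $m < k/2$. Then by Lemma~\ref{lem:signs} (applied with $c = m/k \le 1/2$), the set $\mathcal{S}(\V)$ of sign patterns in $\text{span}(\V)$ with at most $k/2$ nonzeros has size at most $2^k/\sqrt{k}$. The next query $g^{r+1}$ produces a sign pattern $S_0^{r+1} = \text{sign}(\phi(\AA g^{r+1}))$; the key point is that if this sign pattern does \emph{not} lie in $\mathcal{S}(\V)$, then the corresponding row of $\V$ is linearly independent of the existing rows, so the rank strictly increases. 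By Assumption~1, every individual sign pattern occurs with probability at least $\gamma/2^k$, so the probability that $S_0^{r+1}$ avoids the ``bad'' set $\mathcal{S}(\V)$ is at least
\[
1 - \sum_{S \in \mathcal{S}(\V)} \mathbf{Pr}[\text{sign}(\phi(\AA g)) = S] \ge 1 - \frac{2^k}{\sqrt{k}} \cdot \frac{\text{(something)}}{2^k},
\]
but since Assumption~1 is only a \emph{lower} bound on orthant probabilities, I instead lower-bound the probability of landing in the complement: the complement of $\mathcal{S}(\V)$ has at least $2^k - 2^k/\sqrt{k}$ patterns, and I must argue this complement carries probability at least some constant times $\gamma$. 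This requires care, since a lower bound per pattern does not directly give a lower bound on a large union; the resolution is that Assumption~1 implies the total mass is spread out enough that the complement of a set of relative size $1 - 1/\sqrt{k}$ retains $\Omega(\gamma)$ probability mass.

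\textbf{The main obstacle} I anticipate is precisely this last point: converting the per-orthant lower bound of Assumption~1 into a useful lower bound on the probability of hitting the \emph{large} complementary set of "good" sign patterns while also handling the possibility that $M(g^{r+1}) = 0$ (so no usable gradient is produced). I would handle the latter via the second part of Assumption~2, which bounds $\mathbf{Pr}[M(g)=0] \le \gamma/8$, ensuring a fresh usable gradient arrives with probability $\Omega(\gamma)$. Combining these, each query increases the rank with probability $\Omega(\gamma)$ as long as the rank is below $k/2$. Since only $k/2$ rank increases are needed, a standard coupon-collector / Chernoff bound over $r = O(k\log(k/\delta)/\gamma)$ trials shows that with probability $1-\delta$ we accumulate at least $k/2$ rank increases, so $\text{rank}(\ZZ) \ge k/2$. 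Finally, Assumption~2 (the entry-wise nonzero condition on the products $w^T\prod(\W_i)_{S_i}$) guarantees that whenever a new \emph{sign pattern} $S_0$ appears, the associated coefficient vector in $\V$ is genuinely supported on exactly $S_0$ and nonzero there, so that distinct sign patterns outside $\mathcal{S}(\V)$ truly contribute new rank rather than collapsing to zero or to an existing row; I would make this dependence explicit to close the gap between "new sign pattern" and "rank increase."
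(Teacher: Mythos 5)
Your overall architecture matches the paper's proof: compute gradients via Lemma \ref{lem:grad}, write $\ZZ = \V\AA$, use Lemma \ref{lem:signs} to bound the realizable sign patterns, use Assumption 2 to identify the sign pattern of each new row of $\V$ with $\text{sign}(\phi(\AA g))$, sum the per-orthant lower bound of Assumption 1 over a large set of ``good'' patterns, subtract the $\gamma/8$ failure probability for the event $M(g)=0$, and finish with a coupon-collector argument over $O(k\log(k/\delta)/\gamma)$ trials. However, there is one genuine flaw in your central step: the claim that ``if $S_0^{r+1}$ does not lie in $\mathcal{S}(\V)$, then the corresponding row is linearly independent of the existing rows'' is false. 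The set $\mathcal{S}(\V)$ contains, by definition, only realizable sign patterns with at most $k/2$ nonzeros. A sign pattern with more than $k/2$ nonzeros is therefore automatically outside $\mathcal{S}(\V)$, yet the corresponding gradient row may well lie in the span of $\V$: for instance, if $\V$ has the single row $(1,1,\dots,1)$, the all-ones sign pattern is not in $\mathcal{S}(\V)$, but a new row $(2,2,\dots,2)$ realizing it gives no rank increase. Consequently, your probability computation over ``the complement of $\mathcal{S}(\V)$,'' a set of relative size $1-1/\sqrt{k}$, lower-bounds the probability of an event that does not imply rank growth.

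The fix --- and what the paper actually does --- is to restrict the good set to sign patterns with at most $k/2$ nonzeros that are \emph{not} realizable in the row span of $\V$. Since at least $\frac{1}{2}2^k$ of the $2^k$ patterns have at most $k/2$ nonzeros, and by Lemma \ref{lem:signs} at most $\frac{2^k}{\sqrt{k}}$ of those are realizable, this good set $\mathcal{S}$ has size at least $\frac{1}{2}2^k - \frac{2^k}{\sqrt{k}} \geq \frac{1}{4}2^k$. Summing the per-pattern bound of Assumption 1 over $\mathcal{S}$ gives $\mathbf{Pr}\left[\text{sign}(\phi(\AA g)) \in \mathcal{S}\right] \geq \gamma/4$, and a union bound with $\mathbf{Pr}\left[M(g)=0\right] \leq \gamma/8$ leaves success probability at least $\gamma/8$ per trial; for a pattern in $\mathcal{S}$, Assumption 2 ensures (as you correctly note) that the new row's sign pattern equals that pattern, and now non-realizability genuinely forces the row outside $\text{span}(\V)$. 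With this correction --- which uses exactly your ``spread-out mass'' resolution, just applied to the right set --- the remainder of your argument goes through as in the paper.
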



\section{Networks with Thresholding on Differentiable Activations}\label{sec:smooth}
In this section, we consider networks that have a threshold function at the output node, as is done often for classification. Specifically, let $\tau: \R \to \{0,1\}$ be the threshold function: $\tau(x) = 1$ if $x \geq 1$, and $\tau(x) = 0$ otherwise. Again, we let $\AA \in \R^{k \times n}$ where $k<n$, be the innermost weight matrix. The networks $M: \R^{n} \to \R$ we consider are then of the form:
\[  M(x) = \tau(\W_1 \phi_1( \W_2 \phi_2( \dots \phi_d \AA x )) \dots ) \]
where $\W_i \in \R^{k_i \times k_{i+1}}$ and each $\phi_i$ is a continuous, differentiable activation function applied entrywise to its input. We will demonstrate that even for such functions with a binary threshold placed at the end, giving us minimal information about the network, we can still achieve \textit{full} span recovery of the weight matrix $\AA$, albeit with the cost of an $\epsilon$-approximation to the subspace. Note that the latter fact is inherent, since the gradient of any function that is not linear in some ball around each point cannot be obtained exactly without infinitely small perturbations of the input, which we do not allow in our model. 

We can simplify the above notation, and write $\sigma(x) = \W_1 \phi_1( \W_2 \phi_2( \dots \phi_d \AA x )) \dots ) $, and thus $M(x) = \tau(\sigma(x))$. Our algorithm will involve building a subspace $V \subset \R^n$ which is a good approximation to the span of $\AA$. At each step, we attempt to recover a new vector which is very close to a vector in Span$(\AA)$, but which is nearly orthogonal to the vectors in $V$.  Specifically, after building $V$, on an input $x \in \R^n$, we will query $M$ for inputs $M((\II_n - \P_V) x)$. Recall that $\P_V$ is the projection matrix onto $V$, and $\P_{V^\bot}$ is the projection matrix onto the subspace orthogonal to $V$. Thus, it will help here to think of the functions $M,\sigma$ as being functions of $x$ and not $(\II_n - \P_V) x$, and so we define $\sigma_V(x) = \sigma(\AA (\II_n - \P_V) x)$, and similarly $M_V(x) = \tau(\sigma_V(x))$. 
For the results of this section, we make the following assumptions on the activation functions. 

\vspace{.1in}
\noindent
\textbf{Assumptions:}
\vspace{.1in}
\begin{enumerate}
    \item  The function $\phi_i: \R \to \R$ is continuous and twice differentiable, and $\phi_i(0) = 0$.
    \item   $\phi_i$ and $\phi_i'$ are $L_i$-Lipschitz, meaning: 
    \[  \sup_{x \in \R}  \left|\frac{d}{dx}\phi_i(x)\right|  \leq L_i, \; \; \;  \sup_{x \in \R}  \left|\frac{d^2}{d^2x}\phi_i(x)\right|  \leq L_i\]
    \item The network is non-zero with bounded probability: for every subspace $V \subset \R^n$ of dimension dim$(V) < k$, we have that $\mathbf{Pr}_{g \sim \mathcal{N}(0,\II_n) }  \left[\sigma_V(g) \geq 1 \right] \geq \gamma$ for some value $\gamma >0$. 
    \item Gradients are not arbitrarily small near the boundary: for every subspace $V \subset \R^n$ of dimension dim$(V) < k$ 
    \[  \mathbf{Pr}_{g \sim \mathcal{N}(0,\II_n)} \left[ \; \left|\nabla_g \sigma_V(c  g) \right| \geq \eta, \forall c > 0 \text{ such that }\sigma_V(cg)=1, \text{ and } \sigma_V(g)  \geq 1 \right] \geq \gamma \]     
    for some values $\eta,\gamma >0$, where $\nabla_g \sigma_V(c  g) $ is the directional derivative of $\sigma_V$ in the direction $g$.\\
\end{enumerate}
The first two conditions are standard and straightforward, namely $\phi_i$ is differentiable, and has bounded first and second derivatives (note that for our purposes, they need only be bounded in a ball of radius $\poly(n)$). Since $M(x)$ is a threshold applied to $\sigma(x)$, the third condition states that it is possible to find inputs $x$ with non-zero network evaluation $M(x)$. Our condition is slightly stronger, in that we would like this to be possible even when $x$ is projected away from any $k' < k$ dimensional subspace (note that this ensures that $\AA x$ is non-zero, since $\AA$ has rank $k$).

The last condition simply states that if we pick a random direction $g$ where the network is non-zero, then the gradients of the network are not arbitrarily small along that direction at the threshold points where $\sigma(c \cdot g) = 1$. Observe that if the gradients at such points are vanishingly small, then we cannot hope to recover them. Moreover, since $M$ only changes value at these points, these points are the only points where information about $\sigma$ can be learned. Thus, the gradients at these points are the \textit{only} gradients which could possibly be learned. We note that the running time of our algorithms will be polynomial in $\log(1/\eta)$, and thus we can even allow the gradient size $\eta$ to be exponentially small.

 \subsection{The Approximate Span Recovery Algorithm}
 We now formally describe and analyze our span recovery algorithm for networks with differentiable activation functions and $0/1$ thresholding.
Let $\kappa_i$ be the condition number of the $i$-th weight matrix $\W_i$, and let $\delta>0$ be a failure probability, and let $\epsilon >0$ be a precision parameter which will affect the how well the subspace we output will approximate Span$(\AA)$. 
Now fix $N = \poly(n,k,\frac{1}{\gamma},\sum_{i=1}^d \log( L_i),\sum_{i=1}^d \log(\kappa_i), \log( \frac{1}{\eta}), \log(\frac{1}{\epsilon}),\log(\frac{1}{\delta}))$. The running time and query complexity of our algorithm will be polynomial in $N$.  Our algorithm for approximate span recovery is given formally in Algorithm \ref{fig:algsmooth}. The technical proofs of Proposition \ref{prop:binary}, Lemma \ref{lem:contmain}, and Theorem \ref{thm:smoothmain}, can all be found in Appendix \ref{app:sec4}.

\begin{algorithm}[t]
\SetKwInOut{Input}{Input}
\caption{Span Recovery With Adaptive Gradients\label{fig:algsmooth}}
  $V \leftarrow \emptyset, N= \poly \big(n,k,\frac{1}{\gamma},\sum_{i=1}^d \log( L_i),\sum_{i=1}^d \log(\kappa_i), \log( \frac{1}{\eta}), \log(\frac{1}{\epsilon}),\log(\frac{1}{\delta})\big), \eps_0 \leftarrow 2^{-\poly(N)}$\\
  \For{$i = 0, \dots, k$}{
       	Generate $g \sim \mathcal{N}(0,\II_n)$ until $M((\II_n - \P_V)g) = 1$\\
    	Find a scaling $\alpha>0$ via binary search on values $\tau(\sigma_V(\alpha g))$ such that $x = \alpha g$ satisfies $\eps_0 \eta 2^{-N}\leq \sigma_V(x) - 1 \leq 2\eps_0$. \Comment{Proposition~\ref{prop:binary}}  \\
    	Generate $g_1,\dots,g_n \sim \mathcal{N}(0, \II_n)$, and set $u_i = \left(g_i2^{-N} - x/\|x\|_2\right)$. \\
    	For each $i \in [n]$, binary search over values $c$ to find $c_i$ such that 
    	\[1 - \beta\leq \sigma_V(x + c_i u_i) \leq 1 \]
    	where $\beta = 2^{-N^2} \eps_0^2$.\\ 
    	If any $c_i$ satisfies $|c_i| \geq (10 \cdot 2^{-N}\eps_0/\eta)$, restart from line $5$ (regenerate the Gaussian $g$).\\
    	Otherwise, define $\BB \in \R^{n \times n}$ via $\BB_{*,i} =  u_i$, where $\BB_{*,i}$ is the $i$-th column of $\BB$. Define $b \in \R^n$ by $b_i = 1/c_i$. \label{line8}\\
    	Let $y^*$ be the solution to:
    	\[		\min_{y \in \R^n} \|y^T \BB - b^T\|_2^2		\]  
    	Set $v_i = y^*$ and $V \leftarrow  \text{Span}(V,v_i)$. \label{line9}\\

}

\Return{$V$}
\end{algorithm}

\begin{proposition}\label{prop:binary}Let $V \subset \R^n$ be a subspace of dimension $k' < k$, and fix any $ \eps_0 > 0$. Then we can find a vector $x $ with $0 \leq \sigma_V( x) - 1  \leq 2 \eps_0$ 
in expected $O(1/\gamma + N\log(1/\eps_0))$ time. Moreover, with probability $\gamma/2$ we have that $\nabla_x \sigma_V(x) > \eta/4$ and the tighter bound of $\eps_0 \eta 2^{-N}\leq \sigma_V( x)  - 1 \leq  2\eps_0$.
\end{proposition}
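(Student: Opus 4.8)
The plan is to combine rejection sampling (to land on a random ray where the network fires) with a binary search along that ray using only the $0/1$ feedback $M_V(\alpha g)=\tau(\sigma_V(\alpha g))$ to locate a point just above the threshold. For the first claim, the third standing assumption guarantees that a Gaussian $g\sim\mathcal{N}(0,\II_n)$ satisfies $\sigma_V(g)\ge1$ (equivalently $M_V(g)=1$) with probability $\ge\gamma$; since each evaluation of $M_V$ is a single query to $M$ (as $M_V(x)=M((\II_n-\P_V)x)$), rejection sampling produces such a $g$ in expected $O(1/\gamma)$ queries. Fix this $g$ and set $p(\alpha)=\sigma_V(\alpha g)$. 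Because every activation satisfies $\phi_i(0)=0$ we have $\sigma_V(0)=0$, so $p(0)=0<1\le p(1)$, and $p$ is continuous; thus $p$ crosses $1$ on $[0,1]$, and the sign invariant ``$p(\alpha_{lo})<1\le p(\alpha_{hi})$'' can be maintained while halving $[\alpha_{lo},\alpha_{hi}]$ from $[0,1]$. (Note $p$ need not be monotone, but binary search still brackets a crossing because the invariant is only ever enforced at the two endpoints.) By the standing bounds on activations and weights, $\sigma_V$ is $\Lambda$-Lipschitz on the $\poly(n)$-radius ball containing the search with $\log\Lambda\le N$, so after $t$ steps $p(\alpha_{hi})-p(\alpha_{lo})\le\Lambda\|g\|\,2^{-t}$. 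Taking $t=O(\log(\Lambda\|g\|/\eps_0))=O(N+\log(1/\eps_0))$ (using $\|g\|=O(\sqrt n)$, hence $\log\|g\|\le N$) forces $p(\alpha_{hi})-p(\alpha_{lo})\le2\eps_0$, so $x=\alpha_{hi}g$ obeys $0\le\sigma_V(x)-1\le2\eps_0$ at total expected cost $O(1/\gamma+N\log(1/\eps_0))$.

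For the ``moreover'' clause I condition on the event $F$ of the fourth assumption: at every crossing $c$ with $\sigma_V(cg)=1$ the directional derivative satisfies $|\nabla_g\sigma_V(cg)|\ge\eta$. Since that assumption gives $\Pr_g[F\wedge\{\sigma_V(g)\ge1\}]\ge\gamma$ and rejection sampling conditions exactly on $\{\sigma_V(g)\ge1\}$, event $F$ holds for the sampled $g$ with conditional probability $\ge\gamma$; absorbing the exponentially small chance that $\|g\|$ is atypically large yields the stated $\ge\gamma/2$. Binary search converges to a crossing $c^*$ with $\sigma_V(c^*g)=1$, where $p'(c^*)=\nabla_g\sigma_V(c^*g)\ge\eta$ (it is positive since $p$ passes from below $1$ to $\ge1$, and nonzero by $F$). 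Because $\nabla\sigma_V$ is $\Lambda_2$-Lipschitz on the relevant ball with $\log\Lambda_2\le N$, and any $\alpha$ within $O(2^{-t})$ of $c^*$ (in particular the output fixed below) satisfies $|\nabla_g\sigma_V(\alpha g)-\nabla_g\sigma_V(c^*g)|\le\Lambda_2\|g\|^2\,O(2^{-t})\le\eta/4$ once $t\ge\log(\Lambda_2\|g\|^2/\eta)=O(N)$, the output point obeys $\nabla_x\sigma_V(x)\ge\eta-\eta/4>\eta/4$.

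To place the value in the window I do not output $\alpha_{hi}g$ (whose value could equal $1$ exactly) but shift one grid step higher, taking $\alpha=\alpha_{hi}+2^{-t}$ so that $\alpha-c^*\in[2^{-t},2\cdot2^{-t}]$. A second-order Taylor expansion of $p$ about $c^*$, with remainder at most $\tfrac12\Lambda_2\|g\|^2(\alpha-c^*)^2$, then gives on $F$ the lower estimate $\sigma_V(x)-1\ge\eta\,2^{-t}-O(\Lambda_2\|g\|^2 4^{-t})\ge\tfrac12\eta\,2^{-t}$ and the crude upper estimate $\sigma_V(x)-1\le\Lambda\|g\|\,(2\cdot2^{-t})$. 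Choosing $t$ so that $2\eps_0 2^{-N}\le2^{-t}\le\eps_0/(\Lambda\|g\|)$ — a nonempty range precisely because $\log(2\Lambda\|g\|)\le N$, which is exactly why the lower target carries the $2^{-N}$ cushion — simultaneously yields $\sigma_V(x)-1\le2\eps_0$ and $\sigma_V(x)-1\ge\tfrac12\eta\,2^{-t}\ge\eps_0\eta\,2^{-N}$, the claimed window, at step count $t=O(N+\log(1/\eps_0))$.

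The \textbf{main obstacle} is the two-sided control of the value $\sigma_V(x)$, which we never observe: the $0/1$ oracle reveals only on which side of $1$ we sit, so the upper bound must come from global Lipschitzness while the indispensable strictly positive lower bound can come only from transversality — the slope $\ge\eta$ guaranteed by the fourth assumption — after showing the crossing is non-tangential and transferring that slope from the exact crossing $c^*$ to the perturbed output via the bounded Hessian. Reconciling the crude Lipschitz upper bound with the slope-based lower bound, and verifying that a single resolution $t=O(N+\log(1/\eps_0))$ meets all constraints (this is where the $2^{-N}$ slack in $\eps_0\eta\,2^{-N}$ absorbs the $\poly$-in-$\log$ Lipschitz and Hessian factors), is the delicate step; the non-monotonicity of $\alpha\mapsto\sigma_V(\alpha g)$ is a minor wrinkle handled by enforcing the bracket invariant only at the endpoints.
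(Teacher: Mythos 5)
Your proposal is correct and follows essentially the same route as the paper's proof: rejection sampling via Assumption 3, a bracketing binary search along the ray using only the $0/1$ oracle with the $2^{N}$ Lipschitz bound for the upper estimate, Assumption 4 plus a Gaussian-norm tail bound for the probability-$\gamma/2$ slope guarantee at the crossing, and a small outward push past the crossing (your one-grid-step shift in $\alpha$ is the paper's shift of $x$ by $\eps_0 2^{-N}$ along $g$) whose effect is lower-bounded by the slope and Hessian control to land in the window $[\eps_0\eta 2^{-N}, 2\eps_0]$. The only differences are presentational (Taylor expansion about $c^*$ versus transferring the gradient to the output point and then perturbing), and your version is, if anything, slightly more careful about non-monotonicity and the feasibility of the resolution $t$.
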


We will apply the Proposition \ref{prop:binary} as input to the following Lemma \ref{lem:contmain}, which is the main technical result of this section. Our approach involves first taking the point $x$ from Proposition \ref{prop:binary} such that $\sigma_V(x)$ is close but bounded away from the boundary, and generating $n$ perturbations at this point $M_V(x + u_i)$ for carefully chosen $u_i$. While we do not know the value of $\sigma_V(x + u_i)$, we can tell for a given scaling $c>0$ if $\sigma_V(x + cu_i)$ has crossed the boundary, since we will then have $M_V(x + cu_i) = 0$. Thus, we can estimate the directional derivative $\nabla_{u_i} \sigma(x)$ by finding a value $c_i$ via a binary search such that $\sigma_V(x + c_i u_i)$ is exponentially closer to the boundary than $\sigma_V(x)$. In order for our estimate to be accurate, we must carefully upper and lower bound the gradients and Hessian of $\sigma_v$ near $x$, and demonstrate that the linear approximation of $\sigma_v$ at $x$ is still accurate at the point $x+c_i u_i$ where the boundary is crossed. Since each value of $1/c_i$ is precisely proportional to $\nabla_{u_i} \sigma(x) = \langle \nabla \sigma(x), u_i\rangle$, we can then set up a linear system to approximately solve for the gradient $\nabla \sigma(x)$ (lines \ref{line8} and \ref{line9} of Algorithm \ref{fig:algsmooth}).

\begin{lemma} \label{lem:contmain} Fix any $\eps,\delta >0$, and let $N$ be defined as above.  Then given any subspace $V \subset \R^n$ with dimension dim$(V) < k$, and given $x \in \R^n$, such that $\eps_0 \eta 2^{-N} \leq \sigma_V(x) - 1  \leq 2\eps_0$ where $\eps_0 = \Theta(2^{-N^C}/\eps)$ for a sufficiently large constant $C  = O(1)$, and such that $\nabla_x \sigma_V(x) > \eta/2$, then with probability  $1- 2^{-N/n^2}$, we can find a vector $v \in \R^n$ in expected $\poly(N)$ time, such that  $\|\P_{\text{Span}(\AA)} v\|_2  \geq (1-\eps) \|v\|_2$, 
and such that $\|\P_{V} v\|_2 \leq \epsilon\|v\|_2$.
\end{lemma}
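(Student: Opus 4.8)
The plan is to show that the vector $v=y^*$ returned by Algorithm~\ref{fig:algsmooth} is, up to relative Euclidean error $\eps/2$, a positive scalar multiple of the gradient $\nabla\sigma_V(x)$, and then to read off both inequalities from the location of this gradient. Writing $\rho=\sigma_V(x)-1$ and $w^*=-\nabla\sigma_V(x)/\rho$, the first step is the observation that since $\sigma_V(x)=\sigma(\AA(\II_n-\P_V)x)$, the chain rule gives $\nabla\sigma_V(x)=\P_{V^\perp}\AA^\top h$ for some vector $h$, so $w^*\in V^\perp$; maintaining the invariant $V\subseteq\Span(\AA)$ (which the outer loop of the main theorem preserves inductively), we also get $w^*\in\Span(\AA)$. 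Consequently, if I can prove $\|y^*-w^*\|_2\le\frac{\eps}{2}\|w^*\|_2$, then because $w^*$ sits exactly in $\Span(\AA)\cap V^\perp$ a short triangle-inequality argument yields both $\|\P_{\Span(\AA)}v\|_2\ge(1-\eps)\|v\|_2$ and $\|\P_V v\|_2\le\eps\|v\|_2$. Thus the entire content reduces to the single bound $\|y^*-w^*\|_2\le\frac{\eps}{2}\|w^*\|_2$.

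Next I would analyze a single probe direction. Fix $i$ and set $f_i(c)=\sigma_V(x+cu_i)$, so $f_i(0)=1+\rho$ and $f_i'(0)=\langle\nabla\sigma_V(x),u_i\rangle$. The binary search returns $c_i$ with $f_i(c_i)\in[1-\beta,1]$, and Taylor's theorem gives $f_i(c_i)=1+\rho+c_i\langle\nabla\sigma_V(x),u_i\rangle+\tfrac12 c_i^2\,u_i^\top\nabla^2\sigma_V(\xi_i)u_i$ for some $\xi_i$ on the segment. Solving for $b_i=1/c_i$ and comparing with $\langle w^*,u_i\rangle=-\langle\nabla\sigma_V(x),u_i\rangle/\rho$ shows
\[ b_i=\langle w^*,u_i\rangle+e_i,\qquad |e_i|\lesssim\frac{|\langle\nabla\sigma_V(x),u_i\rangle|}{\rho^2}\Big(\beta+\tfrac12 c_i^2\,\|\nabla^2\sigma_V(\xi_i)\|\Big). \]
Here $\|\nabla^2\sigma_V\|$ is bounded by a fixed quantity $H_{\max}$ assembled from the second-derivative/Lipschitz bounds $L_i$ and the operator norms (condition numbers $\kappa_i$) of the $\W_i$, and $|\langle\nabla\sigma_V(x),u_i\rangle|\le G_{\max}$ of the same flavor. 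The hypothesis $\nabla_x\sigma_V(x)>\eta/2$ guarantees that moving along $u_i\approx -x/\|x\|_2$ decreases $\sigma_V$ at rate $\gtrsim\eta$, so the boundary is crossed at $c_i\approx\rho/|\langle\nabla\sigma_V(x),u_i\rangle|$; together with the restart test $|c_i|<10\cdot 2^{-N}\eps_0/\eta$ this pins each $c_i$ and each $\langle\nabla\sigma_V(x),u_i\rangle$ to within an $O(2^{-N})$ multiplicative factor of its value in the common direction $-x/\|x\|_2$.

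The crux is the least-squares solve, and this is where I expect the real difficulty. By design the probe directions $u_i=2^{-N}g_i-x/\|x\|_2$ all lie within $O(2^{-N}\sqrt n)$ of the single vector $-x/\|x\|_2$, so $\BB=[u_1,\dots,u_n]$ is severely ill-conditioned: with high probability over the $g_i$ it has one singular value $\Theta(\sqrt n)$ (along the common direction $\mathbf{1}$ in the equation-index space, equivalently $x/\|x\|_2$ in $\R^n$) while its remaining $n-1$ singular values are $\Theta(2^{-N})$. The naive bound $\|y^*-w^*\|_2\le\|e\|_2/\sigma_{\min}(\BB)$ loses a factor $2^{N}$ and is useless. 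The saving observation is that $e=(e_i)_i$ decomposes into a part that is \emph{constant} in $i$ plus a fluctuating part: by the second step the quantities $c_i$, $\langle\nabla\sigma_V(x),u_i\rangle$, and $u_i^\top\nabla^2\sigma_V\,u_i$ all equal their $-x/\|x\|_2$ values up to $O(2^{-N})$, so the $\Theta(1)$-sized contribution to $e$ is constant in $i$ while the fluctuation across $i$ is only $O(2^{-N}H_{\max}/\eta)+O(\beta/\rho^2)$. The constant part lies in the well-conditioned $\mathbf{1}$-direction and is amplified merely by $1/\sqrt n$, whereas the fluctuating part, though amplified by $1/\sigma_{\min}\sim 2^{N}$, starts at size $O(2^{-N})$, so the two exponentials cancel and leave $\|y^*-w^*\|_2\lesssim H_{\max}/\eta+(\text{super-exponentially small})$. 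Comparing with $\|w^*\|_2\ge\nabla_x\sigma_V(x)/\rho\ge\eta/(4\eps_0)$ and substituting the algorithm's choices $\eps_0=\Theta(2^{-N^C}/\eps)$ and $\beta=2^{-N^2}\eps_0^2$ drives this error below $\frac{\eps}{2}\|w^*\|_2$, as required.

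Finally I would certify the two probabilistic ingredients: that $\sigma_V$ is differentiable with the claimed gradient (immediate from twice-differentiability of the $\phi_i$), and that the Gaussian matrix $[g_1,\dots,g_n]$ restricted off $x$ has smallest singular value $\gtrsim 2^{-N}$, which follows from standard smallest-singular-value/anti-concentration bounds for Gaussian matrices and accounts for the $1-2^{-N/n^2}$ success probability. I expect the third step — balancing the exponentially small conditioning of $\BB$ against the exponentially small per-equation error, and in particular verifying that the large constant component of $e$ genuinely aligns with the \emph{single} well-conditioned direction of $\BB$ so that it cannot rotate the recovered direction — to be the main obstacle; the first two steps are essentially bookkeeping with Taylor's theorem and the assumed derivative bounds.
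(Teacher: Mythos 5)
Your proposal follows the paper's proof in every structural respect: you target the same vector $w^*=-\nabla\sigma_V(x)/\rho$ (the paper's $\hat{y}=w^T\AA(\II_n-\P_V)/(1-\sigma_V(x))$), control each equation's error via a Taylor expansion plus the binary-search tolerance $\beta$, solve the same least-squares system, invoke a smallest-singular-value bound for $\BB=2^{-N}\G+\X$ (the paper cites a smoothed-analysis theorem for Gaussian perturbations of a fixed matrix), and finish by locating $w^*$ in $\Span(\AA)\cap V^{\perp}$. The gap is precisely at the step you flag as the crux. Your claim that the naive perturbation bound $\|y^*-w^*\|_2\leq\|e\|_2/\sigma_{\min}(\BB)$ ``loses a factor $2^N$ and is useless'' is a miscalculation --- the paper's proof consists of exactly this naive bound. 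What you are missing is that the comparison is not against a $\Theta(1)$-sized target: since $\rho=\sigma_V(x)-1\leq 2\eps_0$ and $\|\nabla\sigma_V(x)\|_2\geq\eta/2$, the target satisfies $\|w^*\|_2\geq\eta/(4\eps_0)=\Omega(\eta\,\eps\,2^{N^C})$, which is \emph{super}-exponentially large, whereas every loss in the naive bound is only singly exponential: $\|e\|_2\leq 2^{O(N)}/\eta^2$ (the dominant term per equation is $c_iH_{\max}/(2\rho)\leq O(2^N/\eta^2)$) and $1/\sigma_{\min}(\BB)\leq 2^{O(N)}$. Hence $\|y^*-w^*\|_2/\|w^*\|_2\leq 2^{O(N)-N^C}\leq\eps/2$ once the constant $C$ is large enough; the super-exponentially small choices $\eps_0=\Theta(2^{-N^C}/\eps)$ and $\beta=2^{-N^2}\eps_0^2$ exist precisely so that this crude bound suffices. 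You even state $\|w^*\|_2\geq\eta/(4\eps_0)$, but never combine it with the naive estimate; doing so finishes the proof and makes your third step unnecessary.

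Moreover, the substitute argument you propose is itself unjustified under the stated hypotheses. Your decomposition needs the fluctuation of $e_i$ across $i$ to be $O(2^{-N}H_{\max}/\eta)+O(\beta/\rho^2)$. Writing out the error, its dominant part is $e_i\approx -c_iH_i/(2\rho)\approx -H_i/(2|\langle\nabla\sigma_V(x),u_i\rangle|)$ with $H_i=u_i^{\top}\nabla^2\sigma_V(\xi_i)u_i$, so near-constancy of $e_i$ across $i$ requires the Hessian quadratic forms to be nearly constant as the evaluation points $\xi_i$ and directions $u_i$ vary. That is a continuity-modulus (Lipschitz-Hessian, i.e., third-derivative) condition which the assumptions of Section \ref{sec:smooth} do not provide: they bound only $|\phi_i'|$ and $|\phi_i''|$, so two Hessian evaluations at points $2^{-N}$ apart may a priori differ by $\Theta(H_{\max})$, not $O(2^{-N}H_{\max})$. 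So as written, the step you call the main obstacle genuinely fails; the repair is simply to revert to the crude bound you dismissed, which is the paper's argument.
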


\begin{theorem}\label{thm:smoothmain}
	Suppose the network
	$M(x) = \tau(\sigma(\AA x))$ satisfies the conditions described at the beginning of this section. Then Algorithm \ref{fig:algsmooth} runs in poly$(N)$ time, making at most  poly$(N)$ queries to $M(x)$, where $N = \poly(n,k,\frac{1}{\gamma},\sum_{i=1}^d \log( L_i),\sum_{i=1}^d \log(\kappa_i), \log( \frac{1}{\eta}), \log(\frac{1}{\epsilon}),\log(\frac{1}{\delta}))$, and returns with probability $1-\delta$ a subspace $V\subset \R^n$ of dimension $k$ such that for any $x \in V$, we have  $\|\P_{\text{Span}(\AA) } x\|_2 \geq (1- \epsilon) \|x\|_2$.
\end{theorem}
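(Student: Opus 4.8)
The plan is to run Algorithm \ref{fig:algsmooth} for $k$ dimension-increasing iterations and argue inductively that at the start of iteration $i$ the current subspace $V = V_{i-1}$ has dimension exactly $i-1 \le k-1$, so that the hypotheses $\dim(V) < k$ of Proposition \ref{prop:binary} and Lemma \ref{lem:contmain} both apply. First I would establish the correctness of a single iteration. By the third assumption, a Gaussian $g$ with $M((\II_n - \P_V)g) = 1$ is found within $O(1/\gamma)$ draws in expectation, and Proposition \ref{prop:binary} then produces by binary search a point $x$ with $\eps_0 \eta 2^{-N} \le \sigma_V(x) - 1 \le 2\eps_0$ (taking $\eps_0$ as required by Lemma \ref{lem:contmain}, i.e. $\eps_0 = \Theta(2^{-N^C}/\eps)$) and, with probability $\gamma/2$, the gradient lower bound $\nabla_x \sigma_V(x) > \eta/4$. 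The restart test in the algorithm, which rejects whenever some $|c_i| \ge 10 \cdot 2^{-N}\eps_0/\eta$, is exactly what certifies this good-gradient regime, since $1/c_i$ is proportional to the directional derivative $\langle \nabla \sigma_V(x), u_i\rangle$; conditioned on passing the test, $x$ meets the hypothesis $\nabla_x \sigma_V(x) > \eta/2$ of Lemma \ref{lem:contmain}. Applying that lemma with precision $\eps'$ (its $\eps$) yields, with probability $1 - 2^{-N/n^2}$, a vector $v_i$ with $\|\P_{\text{Span}(\AA)} v_i\|_2 \ge (1-\eps')\|v_i\|_2$ and $\|\P_{V_{i-1}} v_i\|_2 \le \eps'\|v_i\|_2$. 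The latter bound forces $v_i \notin V_{i-1}$, so $\dim(V_i) = i$, completing the induction and guaranteeing that the returned subspace has dimension $k$.

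Second, I would assemble the per-vector guarantees into the claimed whole-subspace guarantee, which I expect to be the crux. Write $W = \text{Span}(\AA)$ and normalize $\hat v_i = v_i/\|v_i\|_2$. The near-orthogonality $\|\P_{V_{i-1}}\hat v_i\|_2 \le \eps'$ forces, for every $j < i$, $|\langle \hat v_i, \hat v_j\rangle| = |\langle \P_{V_{i-1}}\hat v_i, \hat v_j\rangle| \le \eps'$, so the Gram matrix $G$ of the $\hat v_i$ has unit diagonal and off-diagonal entries of magnitude at most $\eps'$; by Gershgorin, $\lambda_{\min}(G) \ge 1 - (k-1)\eps'$. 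Meanwhile $\|\P_W \hat v_i\|_2 \ge 1 - \eps'$ gives $\|\P_{W^\perp}\hat v_i\|_2 \le \sqrt{2\eps'}$. Then for any $x = \sum_i b_i \hat v_i \in V$, the triangle inequality and Cauchy--Schwarz give $\|\P_{W^\perp} x\|_2 \le \sqrt{2\eps'}\,\|b\|_1 \le \sqrt{2\eps' k}\,\|b\|_2$, while $\|x\|_2^2 = b^\top G b \ge (1-(k-1)\eps')\|b\|_2^2$. Since $\|\P_W x\|_2^2 = \|x\|_2^2 - \|\P_{W^\perp} x\|_2^2$, these combine to $\|\P_W x\|_2 \ge (1 - O(k\eps'))\|x\|_2$. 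Invoking Lemma \ref{lem:contmain} with per-step precision $\eps' = \Theta(\eps/k)$---which only changes $N$ by a constant factor, since $N$ depends polynomially on $\log(1/\eps)$ and $k \le N$---makes this at least $(1-\eps)\|x\|_2$, establishing the theorem's guarantee for all $x \in V$.

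Finally I would account for failure probability and resource bounds. Each of the $k$ iterations fails only if Lemma \ref{lem:contmain} fails, with probability $2^{-N/n^2}$, so a union bound gives total failure probability at most $k \cdot 2^{-N/n^2} \le \delta$, using that $N$ depends polynomially on $\log(1/\delta)$. For running time, Proposition \ref{prop:binary} costs expected $O(1/\gamma + N\log(1/\eps_0))$ per attempt, the good-gradient event occurs with probability $\gamma/2$ so the restart succeeds after $O(1/\gamma)$ attempts in expectation, and Lemma \ref{lem:contmain} costs expected $\poly(N)$; each iteration issues $O(n)$ perturbation queries, each resolved by a $\poly(N)$-length binary search, so the per-iteration query and time complexity is $\poly(N)$, and multiplying by $k \le N$ keeps the total at $\poly(N)$.

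The step I expect to be the main obstacle is the second one: converting the individual $v_i$ approximations into a uniform guarantee over the entire span $V$. The danger is that the small $W^\perp$-components of the $v_i$ could in principle align and accumulate across a linear combination; controlling this requires the near-orthogonality bound $\|\P_{V_{i-1}} v_i\|_2 \le \eps'\|v_i\|_2$ to translate into good conditioning of the Gram matrix, and then tracking how the per-step precision $\eps'$ must scale with $\eps$ and $k$ so that both the $\sqrt{k}$ loss from Cauchy--Schwarz and the $(k-1)\eps'$ loss from Gershgorin are absorbed into the final factor $1-\eps$.
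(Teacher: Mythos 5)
Your proposal is correct and follows essentially the same route as the paper's proof: iterate Lemma \ref{lem:contmain} to build up $k$ nearly orthogonal vectors, each $\eps'$-close to $\text{Span}(\AA)$, then convert the per-vector guarantees into a whole-subspace guarantee by lower-bounding the conditioning of the resulting basis, absorbing the accumulated loss by polynomially rescaling the per-step precision (which only changes $N$ by polynomial factors). The only difference is bookkeeping in the assembly step: the paper decomposes the basis matrix as $\V = \V^+ + \V^-$, bounds $\|\V^-\|_2$ by column norms and $\sigma_{\min}(\V) \geq 1 - O(\eps)$ via a Gram--Schmidt-style expansion, and rescales $\eps$ by $\Theta(1/\sqrt{n})$, whereas you bound $\lambda_{\min}$ of the Gram matrix via Gershgorin and rescale by $\Theta(1/k)$ --- an equivalent (and if anything more careful) computation, since $\lambda_{\min}$ of the normalized Gram matrix is exactly $\sigma_{\min}^2$ of the normalized basis matrix.
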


\section{Experiments}

\begin{figure}[h]
    \centering
    \includegraphics[width=.9\textwidth]{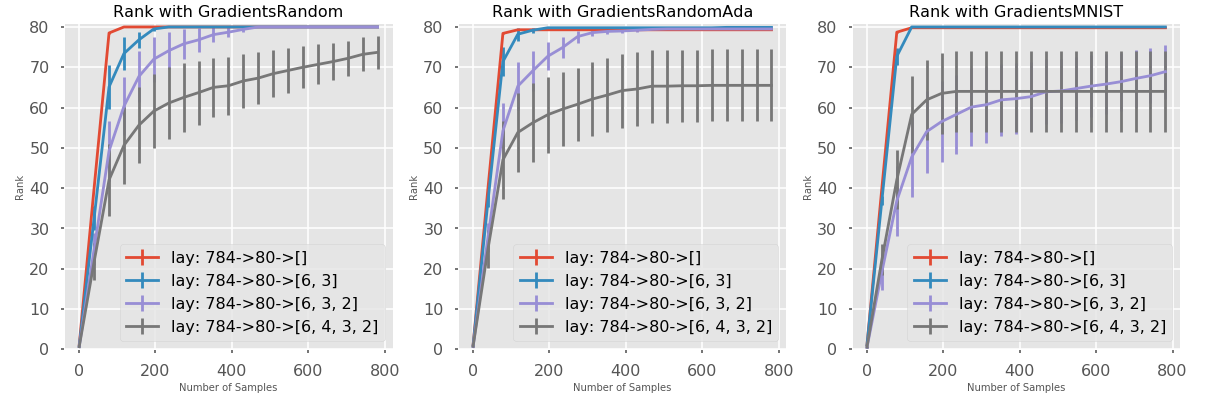}
    \caption{Partial span recovery of small networks with layer sizes specified in the legend. Note that 784->80->[6,3] indicates a 4 layer neural network with hidden layer sizes 784, 80, 6, and 3, in that order. Full span recovery is not always possible and recovery deteriorates as width decreases and depth increases.}
    \label{fig:lowwidth}
\end{figure}

\begin{figure}[h]
    \centering
    \includegraphics[width=.9\textwidth]{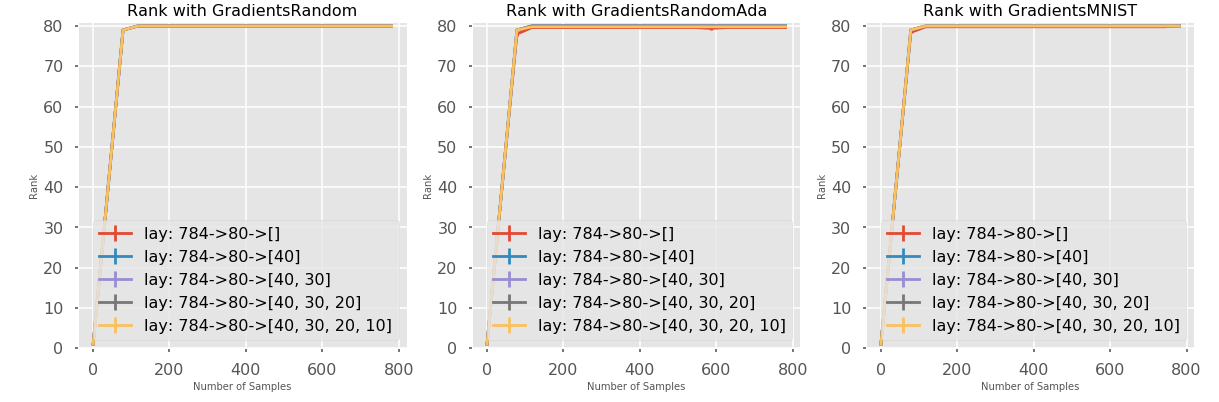}
    \caption{Full span recovery of realistic networks with moderate widths and reasonable architectures. Full recovery occurs with only 100 samples for a rank 80 weight matrix in all settings.}
    \label{fig:highwidth}
\end{figure}

When applying span recovery for a given network, we first calculate the gradients analytically via auto-differentiation at a fixed number of sample points distributed according to a standard Gaussian. Our networks are feedforward, fully-connected with ReLU units; therefore, as mentioned above, using analytic gradients is as precise as using finite differences due to piecewise linearity. Then, we compute the rank of the resulting gradient matrix, where the rank is defined to be the number of singular values that are above 1e-5 of the maximum singular value. In our experiments, we attempt to recover the full span of a 784-by-80 matrix with decreasing layer sizes for varying sample complexity, as specified in the figures. For the MNIST dataset, we use a size 10 vector output and train according to the softmax cross entropy loss, but we only calculate the gradient with respect to the first output node.

Our recovery algorithms are GradientsRandom  (Algorithm~\ref{fig:algReLU}), GradientsRandomAda (Algorithm~\ref{fig:algsmooth}), and GradientsMNIST. GradientsRandom is a direct application of our first span recovery algorithm and calculates gradients via perturbations at random points for a random network. GradientsRandomAda uses our adaptive span recovery algorithm for a random network. Finally, GradientsMNIST is an application of GradientsRandom on a network with weights trained on MNIST data. In general, we note that the experimental outcomes are very similar among all three scenarios.

\begin{figure}[hb]
    \centering
    \includegraphics[width=1\textwidth]{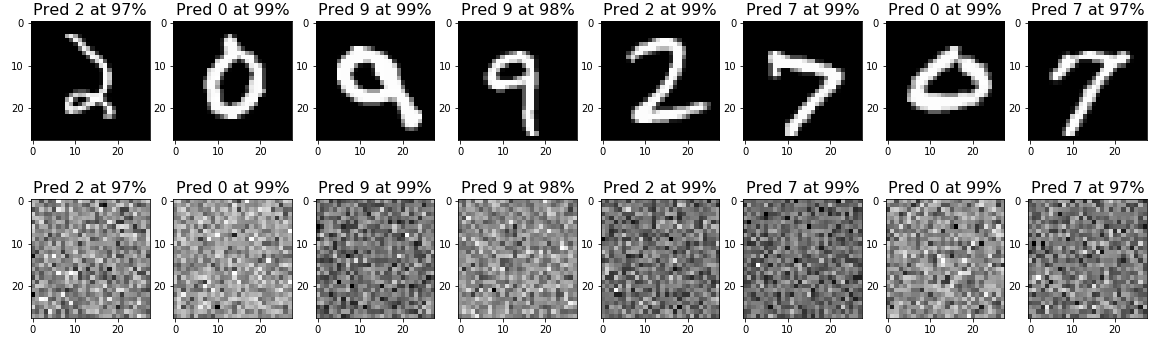}
    \caption{Fooling ReLU networks into misclassifying noise as digits by introducing Gaussian noise into the null space after span recovery. The prediction of the network is presented above the images, along with its softmax probability.}
    \label{fig:attack}
\end{figure}

For networks with very small widths and multiple layers, we see that span recovery deteriorates as depth increases, supporting our theory (see Figure~\ref{fig:lowwidth}). This holds both in the case when the networks are randomly initialized with Gaussian weights or trained on a real dataset (MNIST) and whether we use adaptive or non-adaptive recovery algorithms. However, we note that these small networks have unrealistically small widths (less than 10) and when trained on MNIST, these networks fail to achieve high accuracy, all falling below 80 percent. The small width case is therefore only used to support, with empirical evidence, why our theory cannot possibly guarantee full span recovery under every network architecture.

For more realistic networks with moderate or high widths, however, full span recovery seems easy and implies a real possibility for attack (see Figure~\ref{fig:highwidth}). Although we tried a variety of widths and depths, the results are robust to reasonable settings of layer sizes and depths. Therefore, we only present experimental results with sub-networks of a network with layer sizes [784, 80, 40, 30, 20, 10]. Note that full span recovery of the first-layer weight matrix with rank 80 is achieved almost immediately in all cases, with less than 100 samples.

On the real dataset MNIST, we demonstrate the utility of span recovery algorithms as an attack to fool neural networks to misclassify noisy inputs (see Figure~\ref{fig:attack}). We train a ReLU network (to around $95$ percent accuracy) and recover its span by computing the span of the resulting gradient matrix. Then, we recover the null space of the matrix and generate random Gaussian noise projected onto the null space. We see that our attack successfully converts images into noisy versions without changing the output of the network, implying that allowing a full (or even partial) span recovery on a classification network can lead to various adversarial attacks despite not knowing the exact weights of the network.



\bibliography{refArxiv}
\appendix

\section{Missing Proofs from Section \ref{sec:relu}}\label{app:sec3}
We first restate the results which have had their proofs omitted, and include their proofs subsequently. 

\paragraph{Lemma \ref{lem:randommatrix}}{\it
If $\AA \in \R^{k \times n}$ has orthogonal rows, or if $n > \Omega(k^3)$ and $\AA$ has entries that are drawn i.i.d. from some sub-Gaussian distribution $\mathcal{D}$ with expectation $0$, unit variance, and constant sub-gaussian norm $\|\mathcal{D}\|_{\psi_2} = \sup_{p \geq 1} p^{-1/2} \left(\mathbf{E}_{X \sim \mathcal{D} } |X|^p \right)^{1/p}$ then with probability at least $1-e^{-k^2}$, $\AA$ satisfies Assumption 1 with $\gamma \geq 1/2$. 

Moreover, if the weight matrices $w,\W_1,\W_2,...,\W_d$ with $\W_i \in \R^{k_i \times k_{i+1}}$ have entries that are drawn independently (and possibly non-identically) from continuous symmetric distributions, and if $k_i \geq \log(\frac{16 d}{\delta \gamma})$ for each $i\in [d]$, then Assumption $2$ holds with probability $1-\delta$.} 
\begin{proof}
By Theorem 5.58 of \cite{vershynin2010introduction}, if the entries $\AA$ are drawn i.i.d. from some sub-Gaussian isotropic distribution $\mathcal{D}$ over $R^n$ such that $\|\AA_j\|_2 = \sqrt{n}$ almost surely, then $\sqrt{n} - C \sqrt{k} - t \leq \sigma_{\min} (\AA) \leq \sigma_{\max}(\AA) \leq \sqrt{n} + C\sqrt{k} + t$  with probability at least $1-2e^{-ct^2}$, for some constants $c,C>0$ depending only on $\|\mathcal{D}\|_{\psi_2}$. Since the entries are i.i.d. with variance $1$, it follows that the rows of $\AA$ are isotropic. Moreover, we can always condition on the rows having norm exactly $\sqrt{n}$, and pulling out a positive diagonal scaling through the first Relu of $M(x)$, and absorbing this scaling into $\W_d$. It follows that the conditions of the theorem hold, and we have $\sqrt{n} - C \sqrt{k}  \leq \sigma_{\min} (\AA) \leq \sigma_{\max}(\AA) \leq \sqrt{n} + C\sqrt{k}$ with probability at least $1-e^{-k^2}$ for a suitably large re scaling of the constant $C$. Setting  $n > \Omega(k^3)$, it follows that $\kappa(A) < (1 + 1/(100k))$, which holds immediately if $\AA$ has orthogonal rows. 

Now observe that $\AA g$ is distributed as a multi-variate Gaussian with co-variance $\AA^T\AA$, and is therefore given by the probability density function (pdf)
\[p'(x) = \frac{1}{(2 \pi)^{k/2}\text{det}(\AA^T \AA)^{1/2} } \exp\left(-\frac{1}{2} x^T \AA^T \AA x\right)\]
Let $p(x)= \frac{1}{(2 \pi)^{k/2}} \exp\left(-\frac{1}{2} x^T  x\right)$ be the pdf of an identity covariance Gaussian $\mathcal{N}(0,I_k)$. We lower bound $p'(x)/p(x)$ for $x$ with $\|x\|_2^2 \leq 16k$. In this case, we have
\begin{equation}
    \begin{split}
        \frac{p'(x)}{p(x)} =& \frac{1}{\text{det}(\AA^T \AA)^{1/2} } \exp\left(-\frac{1}{2}\left( x^T \AA^T \AA x - \|x\|_2^2\right)\right)\\ 
        \geq & \kappa(\AA)^{-k} \exp\left(-\frac{1}{2}\left( \|x\|_2^2(1+ 1/(100k)) - \|x\|_2^2\right)\right)\\ 
          \geq & \kappa(\AA)^{-k} \exp\left(-\frac{1}{2} |x\|_2^2/(100k)\right)\\ 
          \geq & (1+1/(100k))^{-k} \exp\left(-\frac{1}{2} (16/100)\right)\\ 
              \geq & (1+1/(100k))^{-k} \exp\left(-\frac{1}{2} (16/100)\right)\\ 
              \geq & 1/2 \\
    \end{split}
\end{equation}

Thus for any sign pattern $S$, $\pr{\text{sign}(Ag) = S  \; :\; \|Ag\|_2^2 \leq k} \geq \frac{1}{2}\pr{\text{sign}(g) = S  \; :\; \|g\|_2^2 \leq k}$. Now $\pr{\text{sign}(g) = S} = 2^{-k}$, and spherical symmetry of Gaussians, $\pr{\text{sign}(g) = S  \; :\;\|g\|_2^2 \leq k} = 2^{-k}$, and thus  $\pr{\text{sign}(Ag) = S  \; :\; \|Ag\|_2^2 \leq k} \geq 2^{-k-1}$. Now $\|Ag\|_2^2 \leq 2 \|g\|_2^2$ which is distributed as a $\chi^2$ random variable with $k$-degrees of freedom. By standard concentration results for $\chi^2$ distributions (Lemma 1 \cite{laurent2000}), we have $\pr{\|g\|_2^2 \geq 8k } \leq e^{-2k}$, so $\pr{\|Ag\|_2^2 \geq 16k } \leq \pr{\|g\|_2^2 \geq 8k } \leq e^{-2k}$. By a union bound,  $\pr{\text{sign}(Ag) = S \; :\; \|Ag\|_2^2 \leq k} \geq 2^{-k-1} - e^{-2k} \geq \frac{1}{4}2^{-k}$, which completes the proof of the first claim with $\gamma = 1/4$.

For the second claim, by an inductive argument, the entries in the rows $i \in S_j$ of the product $\W_{S_j} \left( \prod_{i=j+1}^d (\W_i)_{S_i} \right)$ are drawn from a continuous distribution. Thus each column of  $\W_{S_j} \Big( \allowbreak \prod_{i=j+1}^d \allowbreak (\W_i)_{S_i} \Big)$ is non-zero with probability $1$. It follows that $\langle w,\left( \prod_{i=1}^d (\W_i)_{S_i} \right)_{*,j} \rangle$ is the inner product of a non-zero vector with a vector $w$ with continuous, independent entries, and is thus non-zero with probability $1$. By a union bound over all possible non-empty sets $S_j$, the desired result follows. 

We now show that the second part of Assumption 2 holds. To do so, first let $g \sim \mathcal{N}(0,I_n)$. We demonstrate that $\mathbf{Pr}_{\W_1,\W_2,\dots,\W_d,g} \left[ M(x) = 0\right] \leq 1-\gamma\delta/100$. Here the entries of the $\W_i$'s are drawn independently but not necessarily identically from a continuous symmetric distribution. To see this, note that we can condition on the value of $g$, and condition at each step on the non-zero value of $y_i = \phi(W_{i+1} \phi(W_{i+2} \phi ( \dots \phi(Ag) \dots )$. Then, over the randomness of $W_i$, note that the inner product of a row of $W_i$ and $y_i$ is strictly positive with probability at least $1/2$, and so each coordinate of $W_iy_i$ is strictly positive independently with probability $\geq 1/2$. It follows that $\phi(W_i y_i)$ is non-zero with probability at least $1-2^{-k_i}$. Thus 

\begin{equation}
    \begin{split}
        \mathbf{Pr}_{\W_1,\W_2,\dots,\W_d,g} \left[ M(g) \neq 0\right] &\geq \prod_{i=1}^d (1-2^{-k_i})  \\
        &\geq \left(1 - \frac{\delta \gamma}{16 d}\right)^d \\
        & \geq 1 - \delta \gamma /8\\
    \end{split}
\end{equation}

where the second inequality is by assumption. It follows by our first part that 
\[        \mathbf{Pr}_{\W_1,\W_2,\dots,\W_d,g} \left[ M(g) = 0\right] \leq \delta \gamma /8 \]
So by Markov's inequality,
\[  \mathbf{Pr}_{\W_1,\W_2,\dots,\W_d} \left[\mathbf{Pr}_g \left[ M(g) = 0\right]  \geq \gamma/8 \right] \leq \delta \]
Thus with probability $1-\delta$ over the choice of $\W_1,\dots,\W_d$, we have $\mathbf{Pr}_g \left[ M(g) = 0\right]  \leq \gamma/8$ as desired. 
\end{proof}

\paragraph{Lemma \ref{lem:grad}}{\it
There is an algorithm which, given $g\sim \mathcal{N}(0,I_k)$, with probability $1-\exp(-n^c)$ for any constant $c>1$ (over the randomness in $g$), computes $\nabla M(g) \in \R^n$ with $O(n)$ queries to the network, and in poly$(n)$ running time.}
\begin{proof}

Let $M_i(x) = \phi(\W_i \phi(\W_{i+1} \phi(\dots \phi(\AA x) ) \dots )$ where $\phi$ is the ReLU. If $\nabla M(g)$ exists, there is an $\eps > 0$ such that $M$ is differentiable on $\mathcal{B}_\epsilon(g)$. We show that with good probability, if $g \sim \mathcal{N}(0,I_n)$ (or in fact, almost any continuous distribution), then $M(g)$ is continuous in the ball $\mathcal{B}_\epsilon(g) = \{x  \in \R^n \;|\; \|x-g\|_2 <\epsilon\}$ for some $\epsilon$ which we will now compute.

First, we can condition on the event that $\|g\|_2^2 \leq (nd)^{10c}$, which occurs with probability at least $1-\exp(-(nd)^{5c})$ by concentration results for $\chi^2$ distributions \cite{laurent2000}. 
Now, fix any sign pattern $S_i \subseteq [k_i]$ for the $i$-th layer $M_i(x) = \phi(\W_i( \phi(\dots(\phi(\AA x) \dots )$, and let $\mathcal{S} = (S_1,S_2,\dots,S_{d+1})$. We note that we can enforce the constraint that for an input $x \in \R^n$, the sign pattern of $M_i(x)$ is precisely $S_i$. To see this, note that after conditioning on a sign pattern for each layer, the entire network becomes linear. Thus each constraint that $\langle (\W_i)_{j,^*} , M_{i+1}(x) \rangle \geq 0$ or $\langle (\W_i)_{j,^*} , M_{i+1}(x) \rangle \leq 0$ can be enforced as a linear combination of the coordinates of $x$. 

Now fix any layer $i \in [d+1]$ and neuron $j \in [k_i]$, WLOG $j \in S_i$. We now add the additional constraint that $\langle (\W_i)_{j,^*} , M_{i+1}(x) \rangle \leq \eta$, where $\eta = \exp(-\poly(nd))$ is a value we will later choose. 
Thus, we obtain a linear program with $k +\sum_{i=1}^d k_i$ constraints and $n$ variables. The feasible polytope $\mathcal{P}$ represents the set of input points which satisfy the activation patterns $\mathcal{S}$ and are $\eta$-close to the discontinuity given by the $j$-th neuron in the $i$-th layer.

We can now introduce the following non-linear constraint on the input that $\|x\|_2 \leq (nd)^{10c}$. Let $\mathcal{B} = \mathcal{B}_{(nd)^{10c}}(\vec{0})$ be the feasible region of this last constraint, and let $\mathcal{P}^* = \mathcal{P} \cap \mathcal{B}$. We now bound the Lesbegue measure (volume) $V(\mathcal{P}^*)$ of the region $\mathcal{P}$. First note that $V(\mathcal{P}^*) \leq V(\mathcal{P}')$, where $V(\mathcal{P}')$ is the region defined by the set of points which satisfy:

\begin{equation}
    \begin{split}
        \langle y, x \rangle \geq 0 \\
        \langle y, x \rangle \leq \eta \\
        \|x\|_2^2 \leq  (nd)^{10c}\\
    \end{split}
\end{equation}
where each coordinate of the vector $y \in \R^n$ is a linear combination of products of the weight matrices $\W_\ell$, $\ell \geq i$. 
One can see that the first two constraints for $\mathcal{P}'$ are also constraints for $\mathcal{P}^*$, and the last constraint is precisely $\mathcal{B}$, thus $\mathcal{P}^* \subset \mathcal{P}'$ which completes the claim of the measure of the latter being larger. Now we can rotate $\mathcal{P}'$ by the rotation which sends $y \to \|y\|_2 \cdot e_1 \in \R^n$ without changing the volume of the feasible region. The resulting region is contained in the region $\mathcal{P}''$ given by
\begin{equation}
    \begin{split}
        0 \leq x_1  \leq \eta
        \|x\|_\infty^2 \leq  (nd)^{10c}\\
    \end{split}
\end{equation}
Finally, note that $\mathcal{P}'' \subset \R^n$ is a Eucledian box with $n-1$ side lengths equal to $(nd)^{10c}$ and one side length of $\|y\|_2\eta$, and thus 
$V(\mathcal{P}'') \leq \|y\|_2\eta (nd)^{10nc}$. Now note we can assume that the entries of the weight matrices $\AA,\W_1,\dots,\W_d$ are specified in polynomially many (in $n$) bits, as if this were not the case the output $M(x)$ of the network would not have polynomial bit complexity, and could not even be read in $\poly(n)$ time. Equivalently, we can assume that our running time is allowed to be polynomial in the number of bits in any value of $M(x)$, since this is the size of the input to the problem. Given this, since the coordinates of $y$ were linear combinations of products of the coordinates of the weight matrices, and note that each of which is at most $2^{n^{C}}$ for some constant $C$ (since the matrices have polynomial bit-complexity), we have that $\mathcal{P}^* \leq \eta 2^{n^{C}} (nd)^{10nc}$ as needed.

Now the pdf of a multi-variate Gaussian is upper bounded by $1$, so $\textbf{Pr}_{g \sim \mathcal{N}(0,I_n)} \left[g \in \mathcal{P}^*   \right] \leq V(\mathcal{P}^*) \leq \eta 2^{n^{C}}(nd)^{10nc}$. It follows that the probability that a multivariate Gaussian $g \sim \mathcal{N}(0,I_n)$ satisfies the sign pattern $\mathcal{S}$ and is $\eta$ close to the boundary for the $j$-th neuron in the $i$-th layer. Now since there are at most $2^k \cdot \prod_{i=1}^d 2^{k_i} \leq 2^{nd}$ possible combinations of sign patterns $\mathcal{S}$, it follows that the the probability that a multivariate Gaussian $g \in \mathcal{N}(0,I_n)$ is $\eta$ close to the boundary for the $j$-th neuron in the $i$-th layer is at most $\eta 2^{n^{C}} (nd)^{10nc} 2^{nd}$. Union bounding over each of the $k_i$ neurons in layer $i$, and then each of the $d$ layers, it follows that $g \in \mathcal{N}(0,I_n)$ is $\eta$ close to the boundary for any discontinuity in $M(x)$ is at most $\eta 2^{n^{C}}(nd)^{10nc+1} 2^{nd}$. Setting $\eta \leq 2^{(nd)^{20c+1})}2^{-n^{C}}$, it follows that with probability at least $1-\exp(-(nd)^{c})$, the network evaluated at $g \in \mathcal{N}(0,I_n)$ is at least $\eta$ close from all boundaries (note that $C$ is known to us by assumption). 

Now we must show that perturbing the point $g$ by any vector with norm at most $\eps$ results in a new point $g'$ which still has not hit one of the boundaries. Note that $M(g)$ is linear in an open ball around $g$, so the change that can occur in any intermediate neuron after perturbing $g$ by some $v \in \R^n$ is at most $\|\AA\|_2\prod_{i=1}^d \|\W_i\|_2$, where $\|\cdot \|_2$ is the spectral norm. Now since each entry in the weight matrix can be specified in polynomially many bits, the Frobenius norm of each matrix (and therefore the spectral norm), is bounded by $n^22^{n^{C}}$ for some constant $C$. Thus $$\|\AA\|_2\prod_{i=1}^d \|\W_i\|_2 \leq (n^22^{n^{C}})^{d+1} = \beta$$ and setting $\eps = \eta / \beta$, it follows that $M(x)$ is differentiable in the ball $\mathcal{B}_\eps(x)$ as needed.

We now generate $u_1,u_2,\dots,u_n \sim \mathcal{N}(0,I_n)$, which are linearly independent almost surely. We set $v_i = \frac{\eps u_i }{2\|u_i\|_2 }$. Since $M(g)$ is a ReLU network which is differentiable on $\mathcal{B}_{\eps}(g)$, it follows that $M(g)$ is a linear function on $\mathcal{B}_{\eps}(g)$, and moreover $v_i \in \mathcal{B}_{\eps}(g)$ for each $i \in [n]$. Thus for any $c<1$ we have $\frac{M(g) - M(g+c v_i)}{c} = \nabla_{v_i} M(x)$, thus we can compute $\nabla_{v_i} M(x)$ for each $i \in [n]$. 
Finally, since the directional derivative is given by $\nabla_{v_i} M(x) = \langle \nabla M(x) , v_i/\|v_i\|_2 \rangle$, and since $v_1,\dots,v_n$ are linearly independent, we can set up a linear system to solve for $\nabla M(x)$ exactly in polynomial time, which completes the proof.
\end{proof}

\paragraph{Theorem \ref{thm:relumain}} {\it Suppose the network
$M(x) = w^T \phi(W_1 \phi(W_{2} \phi( \dots W_d \phi(Ax))\dots)$, where $\phi$ is the ReLU, satisfies \textbf{Assumption 1 and 2}. Then the algorithm given in Figure \ref{fig:algReLU} makes $O(kn \log (k/\delta)/\gamma)$ queries to $M(x)$ and returns in $\poly(n,k,1/\gamma,\log(1/\delta))$-time a subspace $V \subseteq \text{span}(A)$ of dimension at least $\frac{k}{2}$ with probability $1-\delta$. }
\begin{proof}
	First note that by Lemma \ref{lem:grad}, we can efficiently compute each gradient $\nabla M(g^i)$ using $O(n)$ queries to the network.
After querying for the gradient $z^i = \nabla M(g^i)$ for $r' \leq r$ independent Gaussian vectors $g^i \in \R^n$, we obtain the vector of gradients $\ZZ = \V \cdot \AA \in \R^{r' \times k}$. Now suppose that $\V$ had rank $ck$ for some $c \leq 1/2$. Now consider the gradient $\nabla(g^{r'+1})$, which can be written as $z^{r'+1}= w^T \left( \prod_{i=1}^d (\W_i)_{S_i^{r'+1}} \right) \text{Diag}(\text{sign}(\AA g^{r'+1}) )  \AA$. Thus we can write $z^{r'+1} = \V_{r'+1} \AA$, where $\V_{r'+1} \in \R^k$ is a row vector which will be appended to the matrix $\V$ to form a new $\ZZ = \V \AA \in \R^{r' + 1 \times k}$ after the $(r'+1)$-th gradient is computed. Specifically, for any $j \in [r]$, we have: $\V_{j} = w^T \left( \prod_{i=1}^d (\W_i)_{S_i^{j} }\right) \text{Diag}(\text{sign}(\AA g^{j}))$.

Let $\mathcal{E}_{r'+1}$ be the event that $M(g^{r'+1}) \neq 0$. It follows necessarily that, conditioned on $\mathcal{E}_{r'+1}$, we have that $\text{sign}(\V_{r'+1}) = S_0^{r'+1}$. The reason is as follows: if $M(g^{r'+1}) \neq 0$, then we could not have $S_j^{r'+1} = \emptyset$ for any $j \in \{0,1,2,\dots,d\}$, since this would result in $M(g^{r' + 1}) = 0$. It follows that the conditions for Assumption $2$ to apply hold, and we have that $w^T \left( \prod_{i=1}^d (\W_i)_{S_i^{r'+1}} \right) \in \R^{k}$ is entry-wise non-zero. Given this, it follows that the sign pattern of $$\V_{r'+1}= w^T \left( \prod_{i=1}^d (\W_i)_{S_i^{r'+1}} \right) \text{Diag}(\text{sign}(\AA g^{j}))$$ will be precisely $S_0^{r'+1}$, as needed.

Now by Lemma \ref{lem:signs}, the number of sign patterns of $k$-dimensional vectors with at most $k/2$ non-zero entries which are contained in the span of $\V$ is at most $\frac{2^k}{\sqrt{k}}$. Let $\mathcal{S}$ be the set of sign patterns with at most $k/2$ non-zeros and such that for every $S \in \mathcal{S}$, $S$ is not a sign pattern which can be realized in the row span of $\V$. It follows that $|\mathcal{S}| \geq \frac{1}{2}2^k -  \frac{2^k}{\sqrt{k}} \geq  \frac{1}{4}2^k$, so by Assumption $1$, we have that $\text{Pr}\left[\text{sign}(Ag^{r'+1}) \in \mathcal{S} \right] \geq \gamma/4$. By a union bound, we have $\text{Pr}\left[\text{sign}(Ag^{r'+1}) \in \mathcal{S} , \; \mathcal{E}_{r'+1} \right] \geq \gamma/8$ 

Conditioned on $\text{sign}(Ag^{r'+1}) \in \mathcal{S})$ and $\mathcal{E}_{r'+1}$ simultaneously, it follows that adding the row vector $\V_{r'+1}$ to the matrix $\V$ will increase its rank by $1$. Thus after $O(\log(k/\delta )/\gamma)$ repetitions, the rank of $\V$ will be increased by at least $1$ with probability $1- \delta/k$. By a union bound, after after $r= O(k\log(k/\delta)/\gamma)$, $\V \in \R^{r \times k}$ will have rank at least $k/2$ with probability at least $1-\delta$, which implies that the same will hold for $\ZZ$ since rank$(\ZZ) \geq \text{rank}(\V)$, which is the desired result. 
\end{proof}


\section{Missing Proofs from Section \ref{sec:smooth}}\label{app:sec4}

\paragraph{Proposition \ref{prop:binary} } {\it Let $V \subset \R^n$ be a subspace of dimension $k' < k$, and fix any $ \eps_0 > 0$. Then we can find a vector $x $ with \[0 \leq \sigma_V( x) - 1  \leq 2 \eps_0  \]
in expected $O(1/\gamma +N \log(1/\eps_0))$ time. Moreover, with probability $\gamma/2$ we have that $\nabla_x \sigma_V(x) > \eta/4$ and the tighter bound of \[\eps_0 \eta 2^{-N}\leq \sigma_V( x)  - 1 \leq  2\eps_0 \]}
\begin{proof}
We begin by generating Gaussians $g_1,\dots$ and computing $M_V(g_i)$. 
By Property $3$ of the network assumptions, we need only repeat the process $1/\gamma$ times until we obtain an input $y  =(\II_n - \P_V) g_i $  with $M(y) = M_V(g_i) = 1$. 
Since all activation functions satisfy $\phi_i(0)= 0$, we know that $\sigma_V(0 \cdot g_i) = 0$, and $\sigma_V(g_i) = 1$. Since $\sigma$ is continuous, it follows that $\psi_{g_i}(c) = \sigma_V(c \cdot g_i)$ is a continuous function $\psi: \R \mapsto \R$. By the intermediate value theorem, there exists a $c^* \leq 1$ such that $\sigma_V(c^* g_i ) = 1$. We argue we can find a $c$ with $|c-c^*| \leq \eps_02^{-N}$ in time $O(N\log(1/\eps_0))$. 

To find $c$, we can perform a binary search. We first try $c_0 = 1/2$, and if $\psi_{g_i}(c_0) = 0$, we recurse into $[1/2,1]$, otherwise if $\psi_{g_i}(c_0) = 1$ we recurse into $[0,1/2]$. Thus, we always recurse into an interval where $\psi_{g_i}$ switches values. It follows that we can find a $c$ with $|c-c^*| \leq \eps_0\|g_i\|_2 2^{-N}$ in time $O(N\log(\|g_i\|_2/\eps_0))$ for some $c^*$ with $\sigma_V(c^*g_i) = 1$. Now observe that it suffices to binary search a total of $O(N\log(\|g_i\|_2/\eps_0))$ times, since $2^{N}$ is an upper bound on the Lipschitz constant of $\psi_x$, which gives $0\leq \sigma_V(c g_i) - 1  \leq  \eps_0$.   
 Now the expected running time to do this is $O(1/\gamma + N\log(\|g_i\|_2/\eps_0))$, but since $\|g_i\|_2$ has Gaussian tails, the expectation of the maximum value of $\|g_i\|_2$ over $1/\gamma$ repetitions is $O(\log(1/\gamma)\sqrt{n})$, and thus the expected running time reduces to the stated bound, which completes the first claim of the Proposition.

For the second claim, note that $\nabla_{g_i} \sigma_V(c^* g_i) > 0$ by construction of the binary search, and since $\sigma_V( c^*g_i) > 0 = 1$, by Property $4$ with probability $\gamma$ we have that $\nabla_{g_i} \sigma_V( g_i) > \eta$. Now with probability $1 - \gamma/2$, we have that $\|g_i\|_2^2 \leq O(n \log(1/\gamma))$ (see Lemma 1 \cite{laurent2000}), so by a union bound both of these occur with probability $\gamma/2$. Now since $\|(c^* - c) x\|_2 \leq \eps_0 2^{-N}$ (after rescaling $N$ by a factor of $\log(\|g_i\|_2) = O(\log(n))$), and since $2^N$ is also an upper bound on the spectral norm of the Hessian of $\sigma$ by construction, it follows that $\nabla_{g_i} \sigma_V(c g_i) > \eta/2$.

Now we set $x \leftarrow cg_i + c \eps_0 2^{-N} g_i/(\|cg_i\|_2)$. First note that this increases $\sigma_V( c x) - 1$ by at most $\eps_0$, so $\sigma(c x) - 1 \leq  2 \eps_0 $, so this does not affect the first claim of the Proposition. But in addition, note that conditioned on the event in the prior paragraph, we now have that $\sigma_V( x) > 1 + \eta \eps_02^{-N} $. The above facts can be seen by the fact that $2^{N}$ is polynomially larger than the spectral norm of the Hessian of $\sigma$, thus perturbing $x$ by $\eps_0 2^{-N }$ additive in the direction of $x$ will result in a positive change of at least $\frac{1}{2} (\eta/4)( \eps_0 2^{-N})$ in $\sigma$. Moreover, by applying a similar argument as in the last paragraph, we will have $\nabla_x \sigma_V( c x) > \eta/4$ still after this update to $x$. 
\end{proof}


\paragraph{Lemma \ref{lem:contmain}} {\it Fix any $\eps,\delta >0$, and let $N= \poly(n,k,\frac{1}{\gamma},\sum_{i=1}^d \log( L_i),\sum_{i=1}^d \log(\kappa_i), \log( \frac{1}{\eta}), \log(\frac{1}{\epsilon}),\log(\frac{1}{\delta}))$. Then given any subspace $V \subset \R^n$ with dimension dim$(V) < k$, and given $x \in \R^n$, such that $\eps_0 \eta 2^{-N} \leq \sigma_V(x) - 1  \leq 2\eps_0$ where $\eps_0 = \Theta(2^{-N^C}/\eps)$ for a sufficiently large constant $C  = O(1)$, and $\nabla_x \sigma_V(x) > \eta/2$, then with probability  $1- 2^{-N/n^2}$, we can find a vector $v \in \R^n$ in expected $\poly(N)$ time, such that 
\[	\|\P_{\text{Span}(\AA)} v\|_2  \geq (1-\eps) \|v\|_2	\]
and such that $\|\P_{V} v\|_2 \leq \epsilon\|v\|_2$. }
\begin{proof}
We generate $g_1,g_2,\dots,g_n \sim \mathcal{N}(0, I_n)$, and set $u_i = g_i 2^{-N} - x/\|x\|_2$. We first condition on the event that $\|g_i\|_2 \leq  N$ for all $i$, which occurs with probability $1- 2^{-N/n^2}$. Note $\nabla_{u_i}\left[ \sigma_V( x)\right] = w^T \AA(\II_n - \P_V) u_i$, where $w^T  = \nabla \left[\sigma(\AA(\II_n - \P_V) x) \right]^T  \in \R^k$, which does not depend on $u_i$. Thus $w^T \AA (\II_n - \P_V)$ is a vector in the row span of $\AA (\II_n - \P_V)$. We can write
\[M_V(x+ cu_i) = \tau\left(( \sigma_V(x) +  c w^T \AA (\II_n - \P_V) u_i + \Xi(c u_i)\right)\]
 where $\Xi(c u_i) = O(\|c (\II_n - \P_V)u_i\|_2^2 2^{N}) =O(\|c u_i\|_2^2 2^{N})$ is the error term for the linear approximation. Note that the factor of $N$ comes from the fact that the spectral norm of the Hessian of $\sigma: \R^n \to \R$ can be bounded by $\prod_i \kappa_i L_i \leq 2^{N}$. Fix some $\beta > 0$. 
We can now binary search again, as in Proposition \ref{prop:binary}, with $O(\log(N/\beta))$ iterations over $c$, querying values $M_V(x + c u_i)$ to find a value $c=c_i > 0$ such that 
\[1 -  \beta  \leq M_V(x + c u_i)\leq  1 \] 
so \[ 1 -  \beta  \leq \left(\sigma_V(x) + c_i w^T \AA (\II_n - \P_V) u_i + \Xi(c_i u_i)\right) \leq  1 \] 
We first claim that the $c_i$ which achieves this value satisfies $\|c_i u_i\|_2 \leq  (10\cdot 2^{-N}\eps_0/\eta)$. To see this, first note that by Proposition \ref{prop:binary}, we have $\nabla_x \sigma_V(x) > \eta/4$ with probability $\gamma$. We will condition on this occurring, and if it fails to occur we argue that we can detect this and regenerate $x$. Now conditioned on the above, we first claim that $\nabla_{u_i} \sigma_V(x) \geq \eta/8$, which follows from the fact that we can bound the angle between the unit vectors in the directions of $u_i$ and $x$ by $$\cos\left(\text{angle}(u_i,x)\right) = \Big\langle \frac{u_i}{\|u\|_2},\frac{ x}{\|x\|_2} \Big\rangle \geq (1- n/2^{-N}) > (1 - \eta/2^{-N/2})$$ 
along with the fact that we have $\nabla_x \sigma_V(x) > \eta/4$. 
Since $|\sigma_V(x) - 1 | < 2\eps_0 < 2^{-N^C}$, and since $2^N$ is an upper bound on the spectral norm of the Hessian of $\sigma$, 
 we have that $\nabla_{u_i} \sigma_V(x + c u_i) > \eta/8 + 2^{-N} > \eta/10$ for all $c<2^{-2N}$. In other words, if $H$ is the hessian of $\sigma$, then perturbing $x$ by a point with norm $O(c) \leq 2^{-2N}$ can change the value of the gradient by a vector of norm at most $2^{2N}\|H\|_2 \le 2^{-N}$, where $\|H\|_2$ is the spectral norm of the Hessian. 
It follows that setting $c = (10 \cdot 2^{-N}\eps_0/\eta)$ is sufficient for $\sigma_V(x +c u_i) < 1$, which completes the above claim. 

Now observe that if after binary searching, the property that $c \leq (10\cdot  2^{-N}\eps_0/\eta)$ does not hold, then this implies that we did not have $\nabla \sigma(x) > \eta/4$ to begin with, so we can throw away this $x$ and repeat until this condition does hold. By Assumption $4$, we must only repeat $O(1/\gamma)$ times in expectation in order for the assumption to hold. 

Next, also note that we can bound $c_i \geq \eps_0 \eta 2^{-N} /N$, since $2^{N}$ again is an upper bound on the norm of the gradient of $\sigma$ and we know that $\sigma_V(x) - 1 > \eps_0 \eta 2^{-N}$. 
Altogether, we now have that $|\Xi(c_i u_i)| \leq c_i^2 2^{N}\leq (10 \cdot 2^{-N}\eps_0/\eta)^2 2^{N}$. We can repeat this binary search to find $c_i$ for $n$ different perturbations $u_1,\dots,u_n$, and obtain the resulting $c_1,\dots,c_n$, such that for each $i \in [n]$ we have 
\[	1  -  \sigma_V(x) - \Xi( c_i u_i) - \beta_i	 = c_i w^T \AA (\II_n - \P_V) u_i \]
where $\beta_i$ is the error obtained from the binary seach on $c_i$, and therefore satisfies $|\beta_i| \leq 2^{-N^2}\eps_0^2$ taking $\poly(N)$ iterations in the search. Now we know $c_i$ and $u_i$, so we can set up a linear system 
\[ \min_y \|y^T\BB  - b^T\|_2^2\]  for an unknown $y \in \R^k$ where the $i$-th column of $\BB$ is given by $\BB_i =  u_i \in \R^n$, and $b_i = 1/c_i$ for each $i \in [n]$. First note that the set $\{u_1,u_2,\dots,u_n\}$ is linearly independent with probability $1$, since $\{u_1-x,u_2-x,\dots,u_n-x\}$ is just a set of Gaussian vectors. Thus $\BB$ has rank $n$, and the above system has a unique solution $y^*$. 

Next, observe that setting $\hat{y} = \frac{w^T  \AA (\II_n - \P_V)}{(1-\sigma_V(x))}$, we obtain that for each $i \in [n]$:
\[\left(\hat{y} B \right)_i= \frac{1}{c_i} -  \frac{1}{c_i}\frac{\Xi( c_i u_i) - \beta_i}{1 - \sigma_V(x)} \]

Thus 
\begin{equation}
\begin{split}
 \|\hat{y}^T B - b^T\|_2 &\leq  \left(\sum_{i=1}^n \left( \frac{1}{c_i} \cdot \frac{ \Xi( c_i u_i) - \beta_i}{1-\sigma_V(x)} \right)^2 \right)^{1/2}\\
&\leq  \left( n \left(\frac{1}{c_i} \cdot \frac{ (10 \cdot 2^{-N}\eps_0/\eta)^2 2^{N} - 2^{-N^2}\eps_0^2}{1-\sigma_V(x)} \right)^2  \right)^{1/2} \\
&\leq (2^{-N})^{O(1)} \frac{\eps_0^2}{c_i (1-\sigma(x)) } \\
\end{split}
\end{equation}

Now setting $y^*$ such that $(y^*)^T \BB = b^T$, we have that cost of the optimal solution is $0$, and $\|\hat{y} \BB - b^T\|_2 \leq  (2^{-N})^{O(1)} \frac{\eps_0^2}{c_i (1-\sigma(x)) }$,  so $\|(y^* - \hat{y})\BB\|_2 \leq  (2^{-N})^{O(1)} \frac{\eps_0^2}{c_i (1-\sigma(x)) }$. By definition of the minimum singular value of $B$, it follows that $\|y^* - \hat{y}\|_2 \leq \frac{1}{\sigma_{\min}(\BB)} (2^{-N})^{O(1)} \frac{\eps_0^2}{c_i (1-\sigma(x)) }$. Now using the fact that $\BB = 2^{-N}\G + \X$  where $\G$ is a Gaussian matrix and $\X$ is the matrix with each row equal to $x$, we can apply Theorem 1.6 of \cite{cook2018lower}, we have  $\sigma_{\min}(\BB) \geq 1 /(2^{-N})^{O(1)}$, So $\|y^* - \hat{y}\|_2 \leq (2^{-N})^{O(1)} \frac{\eps_0^2}{c_i (1-\sigma(x)) }$. Thus we have $\|y^*\|_2 \leq \|\hat{y}\|_2 +  (2^{-N})^{O(1)}\frac{\eps_0^2}{c_i (1-\sigma(x)) }$, and moreover note that $\|\hat{y}\|_2 \geq \|\nabla \sigma(x)\|_2  \frac{1}{1-\sigma(x)} >  \frac{\eta}{8(1-\sigma(x))}$.
So we have that
\begin{equation}
    \begin{split}
       \frac{\|y^* - \hat{y}\|_2 }{\|\hat{y}\|_2}  &\leq \frac{ (2^{-N})^{O(1)} \frac{\eps_0^2}{c_i (1-\sigma(x)) }}{\|y^*\|_2} \\
       &\leq  (2^{-N})^{O(1)}\frac{ \eps_0^2 }{c_i} \\
           &\leq  (2^{-N})^{O(1)} \eps_0\\
            &\leq  (2^{-N})^{O(1)} \eps_0\\
            & \leq 2^{-N/2}
            \end{split}
\end{equation}
Where the last inequality holds by taking $C$ larger than some constant in the definition of $\eps_0$. Thus  $\|y^* - \hat{y}\|_2 \leq 2^{-N/2} \|\hat{y}\|_2$, thus $\|y^* - \hat{y}\|_2 \leq 2\cdot 2^{-N/2} \|y^*\|_2 \leq \eps\|y^*\|_2$ after scaling $N$ up by a factor of $\log^2(1/\epsilon)$. Thus by setting $v = y^*$, and observing that $\hat{y}$ is in the span of $\AA$, we ensure  $\|\P_{\text{Span}(\AA)} v\|_2 \geq (1 - \eps)\|v\|_2$ as desired. For the final claim, 
note that if we had $y^* = \hat{y}$ exactly, we would have $\|\P_{V} y^*\|_2 =0$, since $\hat{y}$ is orthogonal to the subspace $V$. It follows that since $\|y^* - \hat{y}\|_2^2 \leq \eps^2\|y^*\|_2^2$, we have  $\|(\II_n -\P_{V}) y^*\|_2^2 \geq (1-\epsilon^2)\|y^*\|_2^2$, so by the Pythagorean theorem, we have $\|P_{V} y^*\|_2^2 = \|y^*\|_2^2 - \|(\II_n -\P_{V}) y^*\|_2^2 \leq \epsilon^2\|y^*\|_2^2$ as desired. 
\end{proof}

\paragraph{Theorem \ref{thm:smoothmain}}{\it
Suppose the network
	$M(x) = \tau(\sigma(\AA x))$ satisfies the conditions described at the beginning of this section. Then Algorithm \ref{fig:algsmooth} runs in poly$(N)$ time, making at most  poly$(N)$ queries to $M(x)$, where $N = \poly(n,k,\frac{1}{\gamma},\sum_{i=1}^d \log( L_i),\sum_{i=1}^d \log(\kappa_i), \log( \frac{1}{\eta}), \log(\frac{1}{\epsilon}),\log(\frac{1}{\delta}))$, and returns with probability $1-\delta$ a subspace $V\subset \R^n$ of dimension $k$ such that for any $x \in V$, we have 
	\[	\|\P_{\text{Span}(\AA) } x\|_2 \geq (1- \epsilon) \|x\|_2	\]}

\begin{proof}
	We iteratively apply Lemma \ref{lem:contmain}, each time appending the output $v \in \R^n$ of the proposition to the subspace $V \subset \R^n$ constructed so far. WLOG we can assume $v$ is a unit vector by scaling it. Note that we have the property at any given point in time $k'<k$ that $V = \text{Span}(v_1,\dots, v_{k'})$ where each $v_i$ satisfies that $\|\P_{\text{Span} \{v_1,\dots,v_{i-1} \} } v_i\|_2 \leq \eps$. Note that the latter fact implies that $v_1,\dots v_{k'}$ are linearly independent. Thus at the end, we recover a rank $k$ subspace $V = \text{Span}(v_1,\dots, v_{k})$, with the property that  $\|\P_{\text{Span}(\AA)} v_i\|_2^2 \geq (1 - \eps)\|v_i\|_2^2$ for each $i \in [k]$. 
	
	Now let $\V \in \R^{n \times n}$ be the matrix with $i$-th column equal to $v_i$. Fix any unit vector $x = \V a \in V$, where $a \in \R^n$ is uniquely determined by $x$. Let $\V = \V^+ + \V^-$ where $\V^+ = \P_{\text{Span}(A)} \V$ and $\V^- = \V - \V^+$ Then $x = \V^+ a + \V^- a$, and 
	\begin{equation}
	    \begin{split}
	         \|(\II_n - \P_{\text{Span}(\AA)}) x\|_2 &\leq  \|(\II_n - \P_{\text{Span}(\AA)})\V^+ a \|_2 +\|(\II_n - \P_{\text{Span}(\AA)}) \V^- a \|_2 \\
	     &\leq  \|(\II_n - \P_{\text{Span}(\AA)}) \V^- a \|_2\\
	          &\leq  \|\V^-\|_2 \|a\|_2 \\
	    \end{split}
	\end{equation}
	First note that by the construction of the $v_i$'s, each column of $\V^-$ has norm $O(\eps)$, thus $\|\V^-\|_2  \leq O(\sqrt{n} \epsilon)$. Moreover, since $\|x\|_2 = 1$, it follows that $\|a\|_2 \leq \frac{1}{\sigma_{\min}(\V)}$, which we now bound. Since $\|\P_{\text{Span} \{v_1,\dots,v_{i-1} \} } v_i\|_2 \leq \eps$ for each $i \in [k]$, we have 
	
 \allowdisplaybreaks
 
 \begin{align*}
  \|\V a\|_2 &\geq \|\sum_{i=1}^n v_i a_i\|_2 \\
	        & \geq \|\sum_{i=1}^n (\II_n - \P_{\text{Span} \{v_1,\dots,v_{i-1} \} })v_i a_i + \sum_{i=1}^n  \P_{\text{Span} \{v_1,\dots,v_{i-1} \} }v_i a_i \|_2\\
	      &  \geq \|\sum_{i=1}^n (\II_n - \P_{\text{Span} \{v_1,\dots,v_{i-1} \} })v_i a_i\|_2 -\| \sum_{i=1}^n  \P_{\text{Span} \{v_1,\dots,v_{i-1} \} }v_i a_i \|_2 \\
	         &  \geq \|\sum_{i=1}^n (\II_n - \P_{\text{Span} \{v_1,\dots,v_{i-1} \} })v_i a_i\|_2 -O(\epsilon) \|a\|_2 \\
	         & = \left(\|\sum_{i=1}^n (\II_n - \P_{\text{Span} \{v_1,\dots,v_{i-1} \} })v_i a_i\|_2^2 - O(\epsilon) \|a\|_2\|\sum_{i=1}^n (\II_n - \P_{\text{Span} \{v_1,\dots,v_{i-1} \} })v_i a_i\|_2 + O(\epsilon^2) \|a\|_2^2 \right)^{1/2}\\
	          & = \left(\|\sum_{i=1}^n (\II_n - \P_{\text{Span} \{v_1,\dots,v_{i-1} \} })v_i a_i\|_2^2 - O(\epsilon) \|a\|_2^2\right)^{1/2}\\
	           & = \left(\sum_{i=1}^n \| (\II_n - \P_{\text{Span} \{v_1,\dots,v_{i-1} \} })v_i a_i\|_2^2 - O(\epsilon) \|a\|_2^2\right)^{1/2}\\
	        & \geq \left(\sum_{i=1}^n (1-O(\epsilon^2)) |a_i|^2- O(\epsilon) \|a\|_2^2\right)^{1/2}\\
	            & \geq \left( \|a\|_2^2 - O(\epsilon) \|a\|_2^2\right)^{1/2}\\
	               & \geq (1-O(\epsilon)) \|a\|_2 \\
\end{align*}
	\noindent  Thus $\sigma_{\min}(\V) \geq (1 - O(\epsilon))$, so we have $\|(\II_n - \P_{\text{Span}(\AA)}) x\|_2 \leq \|\V^-\|_2 \frac{1}{\sigma_{\min}(\V)} \leq 2\sqrt{n} \epsilon$. By the Pythagorean theorem: $  \| \P_{\text{Span}(\AA)}) x\|_2^2 = 1 -  \|(\II_n - \P_{\text{Span}(\AA)}) x\|_2^2 \geq 1 - O(n\epsilon^2)$. Thus we can scale $\epsilon$ by a factor of $\Theta(1/\sqrt{n})$ in the call to Lemma \ref{lem:contmain}, which gives the desired result of  $\| \P_{\text{Span}(\AA)} x\|_2 \geq 1-\epsilon$. 

\end{proof}

\end{document}